\definecolor{green(munsell)}{rgb}{0.0, 0.66, 0.47}
\definecolor{BlueGreenn}{rgb}{0.3,0.5,0.8}
\definecolor{DB}{rgb}{0.3,0.3,0.3}
\definecolor{DOr}{rgb}{0.7,0.3,0.3}
\definecolor{DGr}{rgb}{0.3,0.7,0.3}
\definecolor{DBl}{rgb}{0.1,0.3,0.5}
\definecolor{arylideyellow}{rgb}{0.91, 0.84, 0.42}
\definecolor{burntorange}{rgb}{0.8, 0.33, 0.0}
\definecolor{chromeyellow}{rgb}{1.0, 0.65, 0.0}
\numberwithin{equation}{section}
\newtheorem{theorem}{Theorem}[section]
\newtheorem{proposition}[theorem]{Proposition}
\newtheorem{lemma}[theorem]{Lemma}
\newtheorem{definition}[theorem]{Definition}
\newcommand{\ot}{\otimes}
\DeclareMathOperator{\ad}{ad}
\newcommand{\Cl}{\mathbb{C}}
\newcommand{\Rl}{\mathbb{R}}
\newcommand{\Nl}{\mathbb{N}}
\newcommand{\B}{\mathcal{B}}
\newcommand{\M}{\mathcal{M}}
\newcommand{\N}{\mathcal{N}}
\newcommand{\Hil}{\mathcal{H}}
\newcommand{\Om}{\Omega}
\newcommand{\om}{\omega}
\newcommand{\la}{\lambda}
\newcommand{\eps}{\varepsilon}
\newcommand{\te}{\theta}
\newcommand{\Std}{{\rm (A)}\xspace}
\newcommand{\Int}{{\rm (B)}\xspace}
\newcommand{\Sing}{{\rm (C)}\xspace}
\newcommand{\Hardy}{\mathbb H}
\newcommand{\Strip}{\mathbb S}
\newcommand{\PG}{\mathcal{P}}
\newcommand{\bte}{\boldsymbol{\theta}}
\title[Deformations of half-sided modular inclusions]{Deformations of half-sided modular inclusions and non-local chiral field theories}
\author{Gandalf Lechner}
\address{Department Mathematik, FAU Erlangen-Nürnberg, Germany\\ gandalf.lechner@fau.de}
\author{Charley Scotford}
\address{School of Mathematics, Cardiff University, United Kingdom\\ scotfordc@cardiff.ac.uk}
\date{November 04, 2021}
\newcommand{\scrX}{{\mathscr{A}_\infty}}
\newcommand{\POm}{P_{\Om}}
\newcommand{\Ploc}{P_{\rm loc}}
\newcommand{\Hilloc}{{\mathcal H}_{\rm loc}}
\newcommand{\Pst}{{\mathrm{P}}}
\newcommand{\Ss}{{\mathscr S}}
\newcommand{\La}{\Lambda}
\def\ker{\operatorname{ker}}
\def\dim{\operatorname{dim}}
\DeclareMathOperator*{\wlim}{w-lim}
\DeclareMathOperator*{\slim}{s-lim}
\def\Im{\operatorname{Im}}
\newcommand{\I}{\mathcal{I}}
\newcommand{\D}{\mathcal{D}}
\newcommand{\CA}[0]{\mathcal{A}} \newcommand{\CB}[0]{\mathcal{B}}
\newcommand{\CI}[0]{\mathcal{I}}
\newcommand{\CO}[0]{\mathcal{O}}
\begin{document}
	 
\maketitle

\begin{abstract}
    We construct explicit examples of half-sided modular inclusions $\N\subset\M$ of von Neumann algebras with trivial relative commutants. After stating a general criterion for triviality of the relative commutant in terms of an algebra localized at infinity, we consider a second quantization inclusion $\N\subset\M$ with large relative commutant and construct a one-parameter family $\N_\kappa\subset\M_\kappa$, $\kappa\geq0$, of half-sided inclusions such that $\N_0=\N$, $\M_0=\M$ and $\N_\kappa'\cap\M_\kappa=\Cl1$ for $\kappa>0$. The technique we use is an explicit deformation procedure (warped convolution), and we explain the relation of this result to the construction of chiral conformal quantum field theories on the real line and on the circle.
\end{abstract}

\section{Introduction}

In the operator-algebraic approach to quantum field theory \cite{Haag:1996,Araki:1999}, models of quantum field theories on a spacetime $\mathscr M$ are described by assigning to open regions $\CO\subset\mathscr M$ von Neumann algebras $\CA(\CO)$ that act on a common Hilbert space and are subject to various interrelated inclusion, commutation, covariance, and spectral properties. While this setup implements the physical principles underlying quantum field theory clearly and rigorously, it is usually difficult to find examples of models that satisfy the axioms and are non-trivial. This is especially the case if ``non-trivial'' is understood to mean that the considered quantum field theory should describe particles with non-trivial interaction. 

In view of this problem, several researchers have considered indirect descriptions of quantum field theories by simpler data. A possible point of view is to fix one of the local algebras $\M=\CA(\CO_0)$, generally isomorphic \cite{BuchholzFredenhagenDAntoni:1987} to the unique hyperfinite factor of type III${}_1$ \cite{Haagerup:1987} and then aim at constructing the net $\CO\mapsto\CA(\CO)$ of all local algebras with the help of group actions, generating von Neumann algebras, and relative commutants -- see \cite{BaumgrtelWollenberg:1992,GuidoLongoWiesbrock:1998,BuchholzLechner:2004,BuchholzLechnerSummers:2011,Tanimoto:2013,BarataJakelMund:2013} for various implementations of this and related ideas.

In these approaches, it is often of central importance to make sure that the relative commutant of an inclusion $\CA(\CO)\subset\CA(\tilde\CO)$ is large in some sense or at least non-trivial. However, such an analysis of relative commutants of von Neumann algebras is often a delicate problem as there are typically no means to directly construct any of its elements. Moreover, the inclusions typically encountered in quantum field theory are not of the form typically studied in operator algebras: The algebras involved are type III${}_1$ factors, and usually no meaningful notion of index or conditional expectation exists. It is therefore important to find manageable examples in which the structure of the relative commutant can be determined.

In this article we investigate quantum field theories on one of the simplest spacetimes, namely the real line ${\mathscr M}=\Rl$, to be thought of as a chiral half of a conformal field theory on a light ray $\Rl\subset\Rl^2$. Such QFTs are closely related to so-called half-sided modular inclusions of von Neumann algebras (Def.~\ref{def:HSM}) which, under favourable circumstances, encode the full structure of the field theory. As we will recall in Section~\ref{section:3cases}, half-sided modular inclusions $\N\subset\M$ (or, equivalently, one-dimensional Borchers triples) can be grouped into three broad families \Std, \Int, \Sing according to the size of their relative commutant $\N'\cap\M$. Here case \Sing corresponds to triviality of the relative commutant $\N'\cap\M=\Cl$. Whereas examples of case \Std and \Int are long since known, the first example of a singular (case \Sing) inclusion was found only very recently by Longo, Tanimoto, and Ueda with methods from free probability \cite{LongoTanimotoUeda:2019}.

The main purpose of this article is to give new examples of singular half-sided inclusions that can be analyzed in detail. To do so, we proceed as follows: In Section~\ref{section:algebrainfinity}, we give a general criterion for determining case \Sing which is based on the algebra localized at the point at infinity of the net $\CA$ associated with the inclusion $\N\subset\M$ (Def.~\ref{def:netA}), and quite simple in its own right (Prop.~\ref{prop:triviality-with-limits}). 

In Section~\ref{section:examples}, we introduce our examples. The starting point of our strategy is a second quantization half-sided modular inclusion $\N\subset\M$, described in terms of simpler ``one-particle'' data (irreducible standard pair) and corresponding Weyl operators. In Section~\ref{section:defdeformed} we perturb a set of generating Weyl operators in a particular and explicit manner depending on a deformation parameter $\kappa\geq0$. These deformed operators define half-sided inclusions $\N_\kappa\subset\M_\kappa$ that coincide with the original second quantized situation for $\kappa=0$. Our main result is Theorem~\ref{thm:main}, stating that for any $\kappa>0$, these inclusions have trivial relative commutant.

Our method is put into a wider context in Section~\ref{section:warping}. We explain how our initial inclusion can be viewed as arising from a free ``bulk QFT'' on $\Rl^2$ by restriction to a light ray. The deformed inclusion then amounts to carrying out a warped convolution deformation \cite{BuchholzLechnerSummers:2011} in $\Rl^2$ and restrict the resulting QFT back to the light ray. 

In Section~\ref{section:proofs} we prove Theorem~\ref{thm:main}. This is done by verifying the criterion developed in Section~\ref{section:algebrainfinity}, which amounts to controlling certain weak limits $\wlim\limits_{t\to-\infty}\sigma_t(S)$ of operators $S$ localized away from $0\in\Rl$ under the modular group $\sigma_t$ of $\M$ w.r.t. the vacuum vector. We proceed in two steps, first analyzing such limits for unbounded field operator polynomials (Theorem~\ref{thm:limit-XY}), and then for bounded operators (Theorem~\ref{thm:boundedlimit}). Our conclusions and an outlook are presented in Section~\ref{section:conclusion}.

\section{Three types of half-sided modular inclusions} \label{section:3cases}

\begin{definition}\label{def:HSM}
    A half-sided modular inclusion is an inclusion of two von Neumann algebras $\N\subset\M$ on a Hilbert space $\Hil$ with a vector $\Om$ that is cyclic and separating for both $\N$ and $\M$ and such that the modular group $\sigma_t=\ad\Delta^{it}$ of $(\M,\Om)$ acts on $\N$ according to $\sigma_t(\N)\subset\N$ for all $t\leq0$.
\end{definition}

\begin{definition}\label{def:BT}
    A one-dimensional Borchers triple $(\M,T,\Om)$ consists of a von Neumann algebra $\M$ on a Hilbert space $\Hil$, a strongly continuous unitary one-parameter group $T$ with positive generator $\Pst$, and a unit vector $\Om\in\Hil$ such that
    \begin{enumerate}
        \item $\Om$ is cyclic and separating for $\M$, and $T(x)\Om=\Om$ for all $x\in\Rl$,
        \item $T(x)\M T(x)^{-1}\subset\M$ for $x\geq0$.
    \end{enumerate}
\end{definition}

We recall that in the situation of a one-dimensional Borchers triple $(\M,T,\Om)$, Borchers' Theorem \cite{Borchers:1992} asserts that 
\begin{align}\label{HJB}
    \Delta^{it}T(x)\Delta^{-it}=T(e^{-2\pi t}x),\quad JT(x)J=T(-x),\qquad t,x\in\Rl,
\end{align}
where $J,\Delta$ are the modular data of $(\M,\Om)$. Thus $T$ extends to a (anti)unitary representation $U$ of the affine group $G$ of $\Rl$ (the ``$ax+b$ group'').

As a consequence of \eqref{HJB}, the inclusion $\N:=T(1)\M T(-1)\subset\M$ is half-sided modular. Conversely, given a half-sided modular inclusion $(\N\subset\M,\Om)$, there exists a strongly continuous unitary one-parameter group $T$ such that $(\M,T,\Om)$ is a one-dimensional Borchers triple, and $\N=T(1)\M T(-1)$ \cite{Wiesbrock:1993-1,ArakiZsido:2005}. We will therefore use the terms half-sided modular inclusion and one-dimensional Borchers triple interchangeably to describe this structure. In addition to $\N,\M,\Om,T,U$, we will constantly use the notations
\begin{align}
    \sigma_t=\ad\Delta^{it},\qquad \alpha_x=\ad T(x),\qquad \om=\langle\Om,\,\cdot\,\Om\rangle.
\end{align}
To avoid confusion, we also emphasize that the modular data $\Delta,J$ will always be the ones of $(\M,\Om)$, so that no further subscript like $\Delta_\M$ will be necessary.

We will ask in addition that the vacuum vector $\Om$ is uniquely characterized (up to multiples) by the condition that it is invariant under $T$, namely $\ker\Pst=\Cl\Om$. We will refer to this condition as ``uniqueness of the vacuum''. In this case, $\M$ is a factor, more specifically a type III${}_1$ factor \cite{Longo:1979,Wiesbrock:1993-1} (disregarding the trivial case $\dim\Hil=1$).

\bigskip 

The physical interpretation of these data is as follows: $\Om$ denotes the vacuum state of a chiral half of a conformal QFT on a light ray, and $T$ corresponds to translation along that light ray. The algebra $\M$ is generated by all observables localized in the right half ray $\Rl_+
$ as expressed by the half-sided invariance under translations in Def.~\ref{def:BT}. Similarly, the commutant $\M'$ contains the observables localized in the left half ray $\Rl_-$. Observables localized in bounded intervals $I\subset\Rl$ (We denote the set of all open bounded intervals $I\subset\Rl$ by $\CI$) can be extracted from this setting by forming appropriate relative commutants.

\begin{definition}\label{def:netA}
    Let $(\N\subset\M,\Om)$ be a half-sided modular inclusion with associated representation $T$ of $\Rl$. For bounded intervals $(a,b)\subset\Rl$, we define 
    \begin{align}
        \CA(a,b):=\alpha_a(\M)\cap\alpha_b(\M)',
    \end{align}
    and call the resulting map $\CI\ni I\mapsto\CA(I)\subset\B(\Hil)$ from bounded intervals of $\Rl$ to von Neumann algebras in $\B(\Hil)$ {\em the local net associated with} $(\N\subset\M,\Om)$.
\end{definition}

As is well known, the local net $I\mapsto\CA(I)$ associated with any half-sided modular inclusion has many physically relevant properties: i) It is a net, i.e. inclusion preserving, ii)~it is local in the sense that $\CA(I_1)$ and $\CA(I_2)$ commute if the intervals $I_1$ and $I_2$ are disjoint, and iii)~it is covariant in the sense that 
\begin{align}
    U(g)\CA(I)U(g)^{-1} = \CA(gI),\qquad g\in G,\,I\in\CI\,.
\end{align}
The data $(\CA(I))_{I\in\Hil},U,\Om$ are therefore quite close to satisfying all axioms of chiral conformal field theory \cite{Longo:2008}. However, while it is clear from the definition that $\Om$ is separating for any $\CA(I)$, $I\in\CI$, it is in general not cyclic. Since the selfadjoint elements of $\CA(I)$ are interpreted as the observables measurable in $I$, this is a point that needs to be investigated carefully.

As a rough notion of ``size'' of the local observable content, we define the {\em local subspace} as 
\begin{align}\label{hilloc}
 \Hilloc := \overline{\CA(I)\Om} \subset\Hil,\qquad I\in\CI.
\end{align}
The following statement is well known but of crucial importance here, so we briefly recall its (Reeh-Schlieder type) proof (see, for example \cite[Lemma 5.1]{BostelmannLechnerMorsella:2011}).

\begin{lemma}
    The subspace \eqref{hilloc} is independent of the choice of interval $I\in\CI$.
\end{lemma}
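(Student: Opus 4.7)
The plan is a Reeh-Schlieder style argument that combines the positive spectrum of the translation generator $\Pst$ with the Tomita-Takesaki analyticity of vectors $A\Om$ with $A\in\M$. Fix $I_1,I_2\in\CI$; I will take $\psi\perp\overline{\CA(I_1)\Om}$ and show that $\langle\psi,A\Om\rangle=0$ for every $A\in\CA(I_2)$. This gives $\overline{\CA(I_2)\Om}\subset\overline{\CA(I_1)\Om}$, and by swapping the roles of $I_1$ and $I_2$, equality follows.

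First I would treat the ``small interval'' case $|K|<|I_1|$. For $B\in\CA(K)$, the scalar function $F(x):=\langle\psi,T(x)B\Om\rangle=\langle\psi,T(x)BT(-x)\Om\rangle$ (using $T(x)\Om=\Om$) vanishes on the non-empty open set $\{x\in\Rl:K+x\subset I_1\}$, because there $T(x)BT(-x)\in\CA(K+x)\subset\CA(I_1)$ by covariance and isotony. Since $\Pst\ge 0$, $F$ is the boundary value of a bounded holomorphic function on the upper half-plane, and vanishing on a real interval forces $F\equiv 0$. In particular $\langle\psi,B\Om\rangle=F(0)=0$, so $\overline{\CA(K)\Om}\subset\overline{\CA(I_1)\Om}$ whenever $|K|<|I_1|$.

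The main step is to handle arbitrary $I_2$ by shrinking it with the modular group. By Borchers' relation \eqref{HJB}, $\sigma_t(\CA(I_2))=\CA(e^{-2\pi t}I_2)$ and $\sigma_t(A)\Om=\Delta^{it}A\Om$. Choosing $t$ large enough that $e^{-2\pi t}|I_2|<|I_1|$, the previous paragraph gives $\Delta^{it}A\Om\in\overline{\CA(I_1)\Om}$, whence $g(t):=\langle\psi,\Delta^{it}A\Om\rangle=0$ for every sufficiently large $t$. Because $A\in\M$, Tomita-Takesaki theory guarantees that $z\mapsto\Delta^{iz}A\Om$ is continuous on the closed strip $\{-1/2\le\Im z\le 0\}$, holomorphic in the interior, and bounded there. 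Hence $g$ is the boundary value of a bounded holomorphic function on this strip that vanishes on a real half-line; a Schwarz-reflection argument then extends $g$ holomorphically across that half-line, and the identity theorem forces $g\equiv 0$. In particular $g(0)=\langle\psi,A\Om\rangle=0$, as required.

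The step I expect to require the most care is the Schwarz-reflection argument that merges the one-sided holomorphy from the Tomita strip with the real-axis vanishing coming from the dilation-plus-positive-energy step: one must glue these into a function holomorphic across a neighbourhood of that half-line before applying the identity theorem to kill $g$ at $t=0$. Everything else reduces to routine manipulation of the defining formula $\CA(a,b)=\alpha_a(\M)\cap\alpha_b(\M)'$, translation and dilation covariance, and the Borchers commutation relation.
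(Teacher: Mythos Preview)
Your strategy is the same Reeh--Schlieder argument the paper uses: positivity of $\Pst$ for the translation step, Tomita strip analyticity for the dilation step, and an edge-of-the-wedge/identity-theorem argument to propagate vanishing. The paper's proof is structured in exactly this way.

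There is, however, a real gap in your dilation step. You write ``Because $A\in\M$'', but $A\in\CA(I_2)=\alpha_a(\M)\cap\alpha_b(\M)'$ with $I_2=(a,b)$, and this lies in $\M$ only when $a\ge 0$. For an interval straddling the origin, $\CA(I_2)\not\subset\M$ (indeed $\CA(-2,-1)\subset\M'$ sits inside such an algebra by isotony), so $A\Om$ need not belong to $\dom\Delta^{1/2}$ and your function $g$ has no strip analyticity to invoke. The paper avoids this by reversing your order of operations: it \emph{first} uses the translation step to move a small subinterval $I_0\subset I$ to the special form $(0,a)\subset\Rl_+$, where $\CA(0,a)\subset\M$ is automatic, and \emph{only then} runs the dilation argument on intervals $(0,a)$. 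Once $\Psi\perp\CA(0,b)\Om$ for every $b>0$, a second application of the translation argument reaches an arbitrary interval. Your proof is easily repaired in the same way: restrict the dilation step to intervals $I_2\subset\Rl_+$ (so that $A\in\M$ is legitimate), and then feed the result back into the translation step to cover the remaining intervals.
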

\begin{proof}
    We will show that for an arbitrary interval $I\in\CI$, a vector $\Psi\perp\CA(I)\Om$ satisfies $\Psi\perp\CA(\tilde I)\Om$ for {\em all} $\tilde I\in\CI$; this implies the claim.
    
    Having fixed $I$ and $\Psi$ in this manner, pick $I_0\subset I$ such that $\overline{I_0}\subset I$, so that $f(x):=\langle\Psi,T(x)A\Om\rangle$, $A\in\CA(I_0)$, vanishes for small enough $|x|$. Since the generator of $T$ is positive, $f$ analytically continues to the complex upper half plane. Hence, by the edge of the wedge theorem, $f(x)=0$ for all $x$, i.e. $\Psi\perp\CA(I_0+x)\Om$ for all $x\in\Rl$. Choosing $x$ such that $I_0+x=(0,a)$ for some $a>0$, we next consider $g(t):=\langle\Psi,\Delta^{it}B\Om\rangle$, $B\in\CA(0,a)$. As the modular group acts by dilations, we have $\sigma_t(\CA(0,a))=\CA(0,e^{-2\pi t}a)\subset\CA(0,a)$ for $t>0$; this yields $g(t)=0$ for $t>0$. On the other hand, $\CA(0,a)\subset\M$ implies that the vector $B\Om$ lies in the domain of $\Delta^{1/2}$. Hence $g$ has an analytic extension to the strip $-\frac{1}{2}<\Im t<0$, and we may again apply the edge of the wedge theorem to conclude $g(t)=0$ for all $t$. 
    
    As any interval $\tilde I$ arises from $I_0$ by dilation and translation, the claim follows.
\end{proof}

We denote the orthogonal projection onto $\Hilloc$ by~$\Ploc$. Note that we always have $\Om\in\Hilloc$, so the projection $\POm$ onto $\Cl\Om$ is always a subprojection of $\Ploc$. Each half-sided inclusion therefore belongs to exactly one of the following three cases:

\begin{description}
    \item[\Std] {\em The standard case\footnote{This terminology is justified by the fact that in this case, $\Om$ is a standard (cyclic and separating) vector for the relative commutant $\N'\cap\M=\CA(0,1)$, but it is not ``standard'' in the sense of representing the generic situation.}:} $\Ploc=1$.
    \item[\Int] {\em The intermediate case:} $\POm\lneq\Ploc\lneq1$. 
    \item[\Sing] {\em The singular case:} $\Ploc=\POm$. 
\end{description}

As the algebra $\CA((0,1))$ of the unit interval $(0,1)$ coincides with the relative commutant $\N'\cap\M$ of $\N\subset\M$, and $\Hilloc=\overline{\CA((0,1))\Om}$, the three cases can also be described in terms of $\N'\cap\M$. Namely, the standard case \Std is equivalent to $\N'\cap\M$ having $\Om$ as a cyclic vector, the intermediate case \Int is equivalent to $\N'\cap\M\neq\Cl1$ being non-trivial but not having $\Om$ as a cyclic vector, and the singular case \Sing is equivalent to $\N'\cap\M=\Cl1$ being trivial, i.e. the inclusion $\N\subset\M$ being singular (irreducible subfactor).

\begin{itemize}
    \item[\Std] From the point of view of applications to conformal quantum field theory, the standard case \Std is the desired situation. In this case, $U$ extends to a strongly continuous representation of the Möbius group PSL$(2,\Rl)$ and $\CA$ can be formulated as a net on the unit circle which transforms covariantly under this Möbius action. Indeed, Guido, Longo, and Wiesbrock have shown that there is a bijection between (isomorphism classes of) standard (case \Std) half-sided modular inclusions and strongly additive local conformal nets on the circle \cite[Cor.~1.9]{GuidoLongoWiesbrock:1998}. 

    Many examples of case \Std are known, for instance half-sided inclusions generated by chiral Wightman fields and all half-sided inclusions arising from second quantization of standard pairs (see Sect.~\ref{section:examples}).
 
    \item[\Int] The intermediate case \Int is closely related to the standard case \Std. In fact, since the algebras $\CA(I)$, $I\in\CI$, and the $G$-representation $U$ leave $\Hilloc$ invariant, one can in case \Int restrict all data to $\Hilloc$ and then obtain a standard one-dimensional Borchers triple
    \begin{align}
        \M_{\rm loc} := \bigvee_{I\in\CI,I\subset\Rl_+}\CA(I)|_{\Hilloc} \subset \B(\Hilloc),\qquad T_{\rm loc} := T|_{\Hilloc}, \qquad \Om.
    \end{align}
    The above comments about case \Std apply to the triple $(\M_{\rm loc}, T_{\rm loc},\Om)$ on~$\Hilloc$ without changes.
    
    Several examples of case \Int are known, see for example \cite{BostelmannLechnerMorsella:2011}. Further examples can be constructed by taking tensor products of case \Std and \Sing.
    
    \item[\Sing] From the point of view of quantum field theory, the singular case \Sing is pathological as it describes a net without any (non-trivial) local observables. From the point of view of operator algebras, these are the irreducible subfactors coming from half-sided modular inclusions.
    
    Only very recently a first example of case \Sing has been found: Longo, Tanimoto, and Ueda \cite{LongoTanimotoUeda:2019} use methods of free probability to construct singular half-sided modular inclusions.    
\end{itemize}

Summarizing the above discussion, it is known that all three cases occur. However, there exist also many undecided examples for which it is currently unclear to which of the cases \Std,\Int,\Sing they belong. For example, the short distance scaling limits of integrable quantum field theories considered in \cite{BostelmannLechnerMorsella:2011} fall into two infinite families (+) and ($-$) depending on the sign of the limit of the scattering function at infinite rapidity transfer. Examples from family (+) could be of case \Std, \Int, or \Sing, and examples from family ($-$) could be of case \Int or \Sing, but in all but two special examples, the exact case is not known. In field theories with several particle species and internal degrees of freedom the spectrum of possibilities is even much wider \cite{Scotford:}.

Given the similarity of cases \Std and \Int, it is therefore an important question to develop tools that allow to decide whether a given half-sided inclusion is singular or not. 

\medskip 

As mentioned before, half-sided modular inclusions with unique vacuum are examples of type III${}_1$ subfactors which do not allow for the application of well-known techniques for investigating their relative commutant. For instance, such a subfactor does not allow for a normal conditional expectation $\M\to\N$, and no meaningful notion of index exists. These inclusions can also never be split \cite{DoplicherLongo:1984} so that the ideas relating to modular compactness/nuclearity \cite{BuchholzDAntoniLongo:1990,BuchholzLechner:2004} do not apply here. This lack of tools explains the many undecided examples.

\medskip

\section{Half-sided inclusions and the algebra at infinity}\label{section:algebrainfinity}

Our condition for detecting the singular case $\Sing$ is based on the notion of an algebra at infinity, familiar from the study of quasilocal algebras \cite{BratteliRobinson:1987}. For our purposes, the right definition is the following.

\begin{definition}
    Let $(\N\subset\M,\Om)$ be a half-sided modular inclusion and $\CA$ its associated local net. Its {\em algebra at infinity} is the von Neumann algebra 
    \begin{align}
        \scrX
        :=
        \bigcap_{I\in\CI}\CA(I)'.
    \end{align}
\end{definition}

Equivalent formulations of the algebra at infinity are 
\begin{align}
   \scrX
    =
    \bigcap_{x>0}\left(\alpha_x(\M)\vee\alpha_{-x}(\M')\right)
     =
    \bigcap_{t<0}\sigma_t(\N\vee J\N J)
    .
\end{align}
Indeed, since $\alpha_x(\M)\vee\alpha_{-x}(\M')=\CA(I_x)'$, where $I_x=(-x,x)$, $x>0$, we have the inclusion $\subset$ at the first equality sign, and the opposite inclusion follows by isotony. By Borchers' Theorem, we have $\sigma_t(\N\vee J\N J)=\sigma_t(\alpha_1(\M)\vee\alpha_{-1}(\M'))=\alpha_{e^{-2\pi t}}(\M)\vee\alpha_{-e^{-2\pi t}}(\M')$, which implies the last formula for $\scrX$.

It is instructive to compare $\scrX$ with the algebras at left/right infinity, namely 
\begin{align}
    \scrX_{\rm right} = \bigcap_{t<0}\sigma_t(\N),\qquad
    \scrX_{\rm left} = \bigcap_{t<0}\sigma_t(J\N J).
\end{align}
If the vacuum is unique, one always has $ \scrX_{\rm right}=\scrX_{\rm left}=\Cl1$ because the translations act trivial on them \cite{Longo:1984}. The algebra $\scrX$ at (left {\em and} right) infinity can however be large:

\begin{proposition}\label{prop:triviality-criterion}
    Let $(\N\subset\M,\Om)$ be a half-sided modular inclusion with algebra at infinity $\scrX$. The following are equivalent:
    \begin{enumerate}
        \item\label{triv1} $\N\subset\M$ is singular (case \Sing), i.e. $\Ploc=\POm$.
        \item\label{triv2} $\scrX=\B(\Hil)$.
        \item\label{triv3} $\POm\in\scrX$.
    \end{enumerate}
\end{proposition}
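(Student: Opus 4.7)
The plan is to establish the cycle (a) $\Rightarrow$ (b) $\Rightarrow$ (c) $\Rightarrow$ (a), taking advantage of two features that are essentially already in place: the $I$-independence of $\Hilloc$ (the preceding lemma) and the fact that $\Om$ is separating for every $\CA(I)$ (which holds because $\CA(I)\subset\alpha_a(\M)$ for a suitable $a$ and $T$-invariance of $\Om$ transports the separating property of $\M$ to each $\alpha_a(\M)$). The implication (b) $\Rightarrow$ (c) is trivial, so only the two other arrows require content.

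For (a) $\Rightarrow$ (b), the idea is to upgrade $\Ploc=\POm$ from a subspace statement to an algebraic triviality statement. If $\Hilloc=\Cl\Om$ then for every bounded interval $I$ and every $A\in\CA(I)$ one has $A\Om\in\Cl\Om$, i.e.\ $A\Om=\om(A)\Om$, so that $(A-\om(A)\mathbf 1)\Om=0$; separation of $\Om$ on $\CA(I)$ then forces $A=\om(A)\mathbf 1$. Hence each $\CA(I)$ collapses to $\Cl\mathbf 1$, and consequently $\scrX=\bigcap_{I\in\CI}\CA(I)'=\B(\Hil)$.

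For (c) $\Rightarrow$ (a), the argument runs in the opposite direction but with the same dictionary: if $\POm\in\scrX$ then $\POm$ commutes with every $A\in\CA(I)$, hence
\begin{equation*}
  A\Om \;=\; A\POm\Om \;=\; \POm A\Om \;=\; \langle\Om,A\Om\rangle\,\Om \;=\; \om(A)\Om \;\in\;\Cl\Om.
\end{equation*}
Therefore $\CA(I)\Om\subset\Cl\Om$ for some (any) $I$, which gives $\Hilloc\subset\Cl\Om\subset\Hilloc$ and so $\Ploc=\POm$.

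I do not expect a genuine obstacle here: the whole proof is conceptually just the observation that, for a separating vector, cyclicity of $\Om$ on $\CA(I)$ degenerating to $\Cl\Om$ is equivalent to $\CA(I)$ degenerating to $\Cl\mathbf 1$. The only small point one must not gloss over is to verify, at the very beginning, that $\Om$ is separating for each local algebra $\CA(I)$ (not merely for $\M$); once that is in hand, everything else is a one-line manipulation with $\POm$ and $\om$.
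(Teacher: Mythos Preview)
Your proof is correct and follows essentially the same route as the paper's. The only cosmetic difference is where the separating property of $\Om$ is invoked: the paper uses it in (c)$\Rightarrow$(a) to pass from $A\Om=\om(A)\Om$ to $A=\om(A)\mathbf 1$ (and relies on the earlier remark that case \Sing{} is equivalent to $\N'\cap\M=\Cl\mathbf 1$ for (a)$\Rightarrow$(b)), whereas you invoke it explicitly in (a)$\Rightarrow$(b) and avoid it in (c)$\Rightarrow$(a) by concluding $\Hilloc\subset\Cl\Om$ directly from $A\Om\in\Cl\Om$.
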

\begin{proof}
    \ref{triv1}$\Rightarrow$\ref{triv2} In the singular case \Sing, we have $\CA(I)'=(\Cl1)'=\B(\Hil)$ for all $I\in\CI$ and therefore $\scrX=\B(\Hil)$. \ref{triv2}$\Rightarrow$\ref{triv3} is trivial. For \ref{triv3}$\Rightarrow$\ref{triv1} we take $A\in\CA((0,1))=\N'\cap\M$. Then $\POm\in\CA((0,1))'$, i.e. $A\Om=A\POm\Om=\POm A\Om=\om(A)\Om$. Since $\Om$ separates $\CA((0,1))$, we get $A=\om(A)1$ and hence \ref{triv1}.
\end{proof}

Every element $X\in\scrX$ can be understood as an obstruction to the existence of local observables $A\in\CA(I)$ because $A$ has to commute with $X$.

\medskip 

We next discuss how to obtain elements of $\scrX$. The idea is to consider an operator $A\in\N\vee J\N J$, i.e. localized in $(-\infty,-1]\cup[1,\infty)$, and scale it with the modular group to operators $\sigma_t(A)$ localized in $(-\infty,-e^{-2\pi t}]\cup[e^{-2\pi t},\infty)$. In the limit $t\to-\infty$ we then obtain elements of $\scrX$.

Before we explain this further, let us recall that the uniqueness of the vacuum implies the weak limits
\begin{align}\label{eq:weaklimits}
    \wlim_{x\to\pm\infty}T(x)=\POm,\qquad \wlim_{t\to\pm\infty}\Delta^{it}=\POm.
\end{align}
This is the case because as a consequence of the representation theory of $G$, the restrictions of the selfadjoint generators $\Pst$ and $\log\Delta$ to the orthogonal complement $\POm^\perp\Hil$ of $\Cl\Om$ have purely absolutely continuous spectrum, see e.g. \cite{Longo:2008}. The limits \eqref{eq:weaklimits} then follow by application of the Riemann-Lebesgue Lemma.

\begin{lemma}\label{lemma:limits}
    Let $A\in\N\vee J\N J$ be such that $\sigma_t(A)$ converges weakly to $L\in\B(\Hil)$ as $t\to-\infty$. Then a) $L\in\scrX$, b) $[L,\Delta^{it}]=0$ for all $t\in\Rl$, c) $L\Om=\om(A)\Om$.
\end{lemma}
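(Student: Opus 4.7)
My plan is to deduce all three parts from three ingredients already on the table: the equivalent description $\scrX = \bigcap_{t<0}\sigma_t(\N \vee J\N J)$ derived just above the lemma, the stability of $\N \vee J\N J$ under $\sigma_t$ for $t \leq 0$, and the weak limit $\Delta^{it}\to \POm$ as $t\to -\infty$ recorded in \eqref{eq:weaklimits}. I do not expect any real obstacle; the only place requiring a little bookkeeping is (a).

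For (a), I would first record that $J\Delta^{it}J = \Delta^{it}$, immediate from Tomita--Takesaki theory (equivalently, $J\log\Delta J = -\log\Delta$ combined with antilinearity of $J$). This gives $\sigma_t(J\N J) = J\sigma_t(\N)J$, and together with the half-sided modular condition $\sigma_t(\N)\subset \N$ for $t \leq 0$ yields
\begin{align*}
    \sigma_t(\N \vee J\N J) \;\subset\; \sigma_s(\N \vee J\N J) \qquad \text{whenever } t \leq s \leq 0.
\end{align*}
Fixing any $s<0$, for all $t\leq s$ the operator $\sigma_t(A)$ belongs to the weakly closed algebra $\sigma_s(\N \vee J\N J)$, so the weak limit $L$ does as well. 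Intersecting over $s<0$ places $L$ in $\scrX$.

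For (b), I would simply exploit that conjugation by the unitary $\Delta^{is}$ is weakly continuous on bounded sets. Since $\|\sigma_t(A)\| = \|A\|$ is bounded, this yields
\begin{align*}
    \sigma_s(L) \;=\; \wlim_{t\to -\infty}\sigma_s(\sigma_t(A)) \;=\; \wlim_{t\to -\infty}\sigma_{s+t}(A) \;=\; L,
\end{align*}
the reparametrisation $t\mapsto s+t$ being immaterial in a limit at $-\infty$. Thus $L$ commutes with every $\Delta^{is}$.

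For (c), the key identity is $\sigma_t(A)\Om = \Delta^{it}(A\Om)$. By \eqref{eq:weaklimits}, $\Delta^{it}\to \POm$ weakly as $t\to -\infty$, so $\sigma_t(A)\Om \to \POm A\Om = \om(A)\Om$ weakly. On the other hand, specialising the weak operator convergence $\sigma_t(A)\to L$ at the vector $\Om$ gives $\sigma_t(A)\Om \to L\Om$ weakly. Equating the two limits yields $L\Om = \om(A)\Om$.
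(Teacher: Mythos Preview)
Your proof is correct and follows essentially the same route as the paper's own argument: for (a) you use monotonicity of $\sigma_t(\N\vee J\N J)$ in $t\leq 0$ together with weak closedness, for (b) the reparametrisation $t\mapsto t+s$, and for (c) the comparison of $\sigma_t(A)\Om$ with $\Delta^{it}A\Om\to\POm A\Om$. The only cosmetic difference is that the paper obtains the monotonicity in (a) from the translation description $\sigma_t(\N\vee J\N J)=\alpha_{e^{-2\pi t}}(\M)\vee\alpha_{-e^{-2\pi t}}(\M')$ (already established before the lemma), whereas you derive it directly from the half-sided modular condition and $J\Delta^{it}J=\Delta^{it}$; both are equally short.
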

\begin{proof}
    a) At fixed $t\in\Rl$, we have $\sigma_{t}(A)\in\sigma_{t}(\N\vee J\N J)=\alpha_{e^{-2\pi t}}(\M)\vee\alpha_{-e^{-2\pi t}}(\M')=:\M_{t}$. Since $\M_{t}\subset\M_{s}$ for $t<s$, the weak limit $L$ is an element of $\bigcap_{t<0}\M_{t}=\scrX$. 
    
    b) By assumption, $\sigma_t(A)\to L$ weakly as $t\to-\infty$. Thus also $\sigma_{t+s}(A)\to L$ weakly. But $\sigma_{t+s}(A)=\Delta^{is}\sigma_t(A)\Delta^{-is}\to\sigma_s(L)$ as $t\to-\infty$, so $\sigma_s(L)=L$ follows. 
    
    c) By assumption, $\sigma_t(A)\to L$ weakly as $t\to-\infty$, and hence $\Delta^{it}A\Om=\sigma_t(A)\Om\to L\Om$ weakly. But on the other hand, $\Delta^{it}A\Om\to\POm A\Om=\om(A)\Om$ by \eqref{eq:weaklimits}. Thus $L\Om=\om(A)\Om$.
\end{proof}

Recalling that norm closed and bounded subsets of $\B(\Hil)$ are weakly compact, $\sigma_t(A)$, $t\to-\infty$, has weak limit points. So the above lemma could be reformulated in terms of weak limit points to avoid the assumption of existence of the weak limit. However, for us the above form will be sufficient.

Our criterion for singular inclusions now follows by combining Prop.~\ref{prop:triviality-criterion} and Lemma~\ref{lemma:limits}.

\begin{proposition}\label{prop:triviality-with-limits}
    Let $(\N\subset\M,\Om)$ be a half-sided modular inclusion with unique vacuum. The following are equivalent:
    \begin{enumerate}
        \item\label{trivlim1} $\N\subset\M$ is singular (case \Sing).
        \item\label{trivlim2} There exists $T\in\N\vee J\N J$ such that $T\Om\neq0$ and $\slim\limits_{t\to-\infty}T\Delta^{-it}$ exists.
        \item\label{trivlim3} There exist $S\in\N\vee J\N J$ such that 
        \begin{align}
            \wlim_{t\to-\infty}\sigma_t(S)=\POm.
        \end{align}
    \end{enumerate}
\end{proposition}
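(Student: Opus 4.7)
The plan is to establish the cycle (a) $\Rightarrow$ (b) $\Rightarrow$ (c) $\Rightarrow$ (a), using Proposition~\ref{prop:triviality-criterion} and Lemma~\ref{lemma:limits} as the backbone and adding one observation for the hardest direction. The implication (c) $\Rightarrow$ (a) is essentially immediate: Lemma~\ref{lemma:limits}~a) applied to $S$ yields $\POm \in \scrX$, and Proposition~\ref{prop:triviality-criterion} then delivers case \Sing.

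For (a) $\Rightarrow$ (b), I first note that $\scrX \subset \N \vee J\N J$. Indeed, using \eqref{HJB} one checks that for $t\leq 0$, $\sigma_t(\N) = \alpha_{e^{-2\pi t}}(\M) \subset \alpha_1(\M) = \N$ and $\sigma_t(J\N J) = \alpha_{-e^{-2\pi t}}(\M') \subset \alpha_{-1}(\M') = J\N J$, so $\sigma_t(\N \vee J\N J) \subset \N \vee J\N J$ and the intersection defining $\scrX$ sits inside $\N \vee J\N J$. In case \Sing, Proposition~\ref{prop:triviality-criterion} gives $\scrX = \B(\Hil)$, hence $\POm \in \N \vee J\N J$. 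Choosing $T := \POm$ then works: $T\Om = \Om \neq 0$, and since $\Delta\Om = \Om$ implies $\POm$ commutes with $\Delta^{-it}$, we have $T\Delta^{-it} = \POm$ for all $t$, so the strong limit trivially exists.

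The heart of the argument is (b) $\Rightarrow$ (c). Given $T$ as in (b), the weak limit $\Delta^{-it} \to \POm$ from~\eqref{eq:weaklimits}, together with boundedness of $T$, forces $T\Delta^{-it} \to T\POm$ weakly; since the strong limit must coincide with the weak one, $L := \slim_{t\to-\infty}T\Delta^{-it} = T\POm$. Applying the norm-square version of strong convergence, $\|T\Delta^{-it}\phi\|^2 \to \|L\phi\|^2$, together with the identity $(T\Delta^{-it})^*(T\Delta^{-it}) = \sigma_t(T^*T)$, one obtains
\begin{align*}
    \langle\phi, \sigma_t(T^*T)\phi\rangle \;\longrightarrow\; \|T\POm\phi\|^2 = \|T\Om\|^2 \, \langle\phi, \POm\phi\rangle
\end{align*}
for every $\phi \in \Hil$. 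Polarization upgrades this to weak operator convergence $\sigma_t(T^*T) \to \|T\Om\|^2 \POm$. Since $\N \vee J\N J$ is a von Neumann algebra, $T^*T$ lies in it, and $\|T\Om\|^2 > 0$ by the assumption $T\Om \neq 0$; thus $S := T^*T / \|T\Om\|^2$ witnesses condition (c).

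The main subtlety is the squaring trick in the last step. Taking simply $S = T$ does not suffice: a direct computation shows $\sigma_t(T) \to \om(T)\POm$ weakly, and $\om(T) = \langle\Om, T\Om\rangle$ can vanish even when $T\Om \neq 0$. Replacing $T$ by the positive operator $T^*T$ symmetrises the rank-one limit at infinity to an honest positive multiple of $\POm$, with prefactor $\|T\Om\|^2$ genuinely witnessing the assumption $T\Om \neq 0$.
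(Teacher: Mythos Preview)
Your proof is correct and follows essentially the same cycle (a)$\Rightarrow$(b)$\Rightarrow$(c)$\Rightarrow$(a) as the paper, with the same choice $T=\POm$ in (a)$\Rightarrow$(b) and the same squaring trick $S=T^*T/\|T\Om\|^2$ in (b)$\Rightarrow$(c). The only differences are expository: you spell out the inclusion $\scrX\subset\N\vee J\N J$ explicitly (the paper jumps straight to $\N\vee J\N J=\B(\Hil)$), and you make the polarization step and the rationale for squaring more explicit.
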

\begin{proof}
    \ref{trivlim1}$\Rightarrow$\ref{trivlim2} If $\N\subset\M$ is singular, we have $\N\vee J\N J=\B(\Hil)$ and hence we may choose $T=\POm$ to satisfy the assumptions in \ref{trivlim2}.
    
    \ref{trivlim2}$\Rightarrow$\ref{trivlim3} Let $L\in\B(\Hil)$ denote the strong limit of $T\Delta^{-it}$ as $t\to-\infty$. Then in particular $T\Delta^{-it}\to L$ weakly. But $\Delta^{-it}\to\POm$ weakly, so we conclude $L=T\POm$. This implies $\sigma_t(T^*T)=(T\Delta^{-it})^*\cdot T\Delta^{it}\to L^*L=\POm T^*T\POm=\|T\Om\|^2\POm$ weakly. Since $T\Om\neq0$ by assumption, we may consider $S=T^*T/\|T\Om\|^2$ which satisfies the assumption in \ref{trivlim3}.
     
    \ref{trivlim3}$\Rightarrow$\ref{trivlim1} The weak limit $\wlim\limits_{t\to-\infty}\sigma_t(S)=\POm$ lies in $\scrX$ by Lemma~\ref{lemma:limits}, which implies \ref{trivlim1} by Prop.~\ref{prop:triviality-criterion}.
\end{proof}

In the next section, we will give examples of half-sided modular inclusions which we will demonstrate to be singular by verifying Prop.~\ref{prop:triviality-with-limits}~\ref{trivlim3}.

\section{Second quantization inclusions and their deformations}\label{section:examples}

\subsection{Definition of the deformed and undeformed inclusions}\label{section:defdeformed}

In this section we describe the particular half-sided modular inclusions that we will investigate. Our starting point is a standard subspace version of a Borchers triple, namely a {\em non-degenerate standard pair} $(T_1,H)$ \cite{LongoWitten:2010}.

\begin{definition}
    A standard pair $(T_1,H)$ on a complex Hilbert space $\Hil_1$ consists of 
    \begin{enumerate}
        \item a closed real subspace $H\subset\Hil_1$ that is {\em standard}, namely cyclic in the sense that $H+iH$ is dense in $\Hil$, and separating in the sense $H\cap iH=\{0\}$,
        \item a strongly continuous unitary one-parameter group $T_1(x)=e^{ix\Pst_1}$ with positive non-singular generator $\Pst_1>0$, $\ker \Pst_1 =\{0\}$,
    \end{enumerate}
    such that
    \begin{align}\label{eq:T1halfsided}
        T_1(x)H\subset H,\qquad x\geq0.
    \end{align}
\end{definition}
As is well known, a standard pair gives rise to a half-sided modular inclusion / one-dimensional Borchers triple by second quantization. Namely, consider the Bose Fock space $\Hil$ over $\Hil_1$ with its canonical vacuum vector $\Om$ and Weyl unitaries $V(h)$, $h\in\Hil_1$. Then the von Neumann algebra 
\begin{align}\label{eq:MH}
    \M(H) := \{V(h)\,:\,h\in H\}''
    \subset
    \B(\Hil),
\end{align}
the second quantized representation $T(x)=e^{ix\Pst}$, $\Pst=\Gamma(\Pst_1)$, and the Fock vacuum $\Om$ form a one-dimensional Borchers triple $(\M(H),T,\Om)$ with unique vacuum. It is known that such triples always belong to the standard case \Std (see \cite[Prop.~2.3]{LechnerLongo:2014}, related to \cite[Thm.~4.5]{BrunettiGuidoLongo:2002}). The modular data $\Delta,J$ of $(\M(H),\Om)$ arise from the modular data $\Delta_1,J_1$ of $H$ (defined by polar decomposition of $S_1:H+iH\to H+iH,h+ih\mapsto h-ih$) by second quantization. The commutant of $\M(H)$ is $\M(H)'=\M(H')$, where $H'=\{\psi\in\Hil_1\,:\,\Im\langle\psi,h\rangle=0\,\forall h\in H\}=J_1H$ is the symplectic complement of $H$ \cite{LeylandsRobertsTestard:1978}.

\medskip 

A standard pair $(T_1,H)$ is called irreducible iff the $G$-representation generated by the one-parameter groups $T_1(x)$ and $\Delta^{it}$ is irreducible. In this case, the pair is unique up to unitary equivalence and can be presented in the form \cite[p.~40, case (i)]{LechnerLongo:2014}
\begin{align}\label{eq:rapidityrep}
    \Hil_1&=L^2(\Rl,d\te),\qquad (T_1(x)\psi)(\te)=e^{ixe^\te}\psi(\te),
\end{align}
with standard subspace \cite[Lemma~4.1]{LechnerLongo:2014}
\begin{align}\label{def:H}
    H &= \{\psi\in\Hardy^2(\Strip_\pi)\,:\overline{\psi(\te+i\pi)}=\psi(\te)\,\text{ a.e.}\}.
\end{align}
Here $\Hardy^2(\Strip_\pi)$ denotes the Hardy space on the strip $\Strip_\pi=\{\zeta\in\Cl\,:\,0<\Im\zeta<\pi\}$, namely all those $\psi\in L^2(\Rl)$ that are boundary values of analytic functions $\psi:\Strip_\pi\to\Cl$ with $\sup_{0<\la<\pi}\int_\Rl|\psi(\te+i\la)|^2d\te<\infty$. The modular data of $H$ are 
\begin{align}
    (\Delta_1^{it}\psi)(\te)=\psi(\te-2\pi t),\qquad (J_1\psi)(\te)=\overline{\psi(\te)}.
\end{align}
It should be noted that elements in $H$ can be obtained from real functions on $\Rl_+$ by Fourier transform, namely the map 
\begin{align}
 \Ss(\Rl)\to\Hil_1,\qquad f\mapsto \hat{f},\qquad \hat{f}(\te):=\int_\Rl f'(x)e^{ixe^\te}dx
\end{align}
carries $C_{c,\Rl}^\infty(\Rl_+)$ into a dense subspace of $H$,
\begin{align}\label{eq:H1d}
    H = \{\hat f\,:\, f\in C_{c,\Rl}^\infty(\Rl_+)\}^-.
\end{align}

\bigskip

We now explain the structure of the deformed versions of $(\M,T,\Om)$ which will depend on a real deformation parameter $\kappa$. To that end, we introduce explicit generating operators on Fock space $\Hil$. In the following, $\D\subset\Hil$ will always denote the dense subspace of finite particle number, and for $\Psi\in\Hil$, its $n$-particle component will be denoted $\Psi_n$, $n\in\Nl$.

Given $\kappa\in\Rl$ and $\psi\in\Hil_1$ we define an annihilation type operator by \cite{GrosseLechner:2007}
\begin{align}\label{def:ak}
    a_\kappa(\psi)&:\D\to\D\\
   (a_\kappa(\psi)\Psi)_n(\te_1,\ldots,\te_n)
    &=
    \sqrt{n+1}\int_\Rl d\te\,\overline{\psi(\te)}\prod_{k=1}^ne^{i\kappa\sinh(\te-\te_k)}\Psi_{n+1}(\te,\te_1,\ldots,\te_n)
    . \nonumber
\end{align}
For $\kappa=0$, the operator $a_0(\psi)$ reduces to the familiar Bose annihilation operator. The reason for the exponential factors $e^{i\kappa\sinh(\te-\te_k)}$ in \eqref{def:ak} will be explained below. 

We then define a field operator as
\begin{align}
    \varphi_\kappa(\xi)
    =
    a_\kappa(\xi)^*+a_\kappa(S_1\xi)
    ,\qquad 
    \xi\in H+iH,
\end{align}
which is a well-defined closable operator on $\D$ depending linearly on $\xi$. Let us recall the following facts \cite {Lechner:2012}:
\begin{itemize}
    \item $\varphi_\kappa(h)$ is essentially selfadjoint on $\D$ for $h\in H$. (The selfadjoint closure will be denoted by the same symbol.)
    \item For $\kappa\geq0$, we have $[e^{i\varphi_\kappa(h)},e^{i\varphi_{-\kappa}(h')}]=0$ for all $h\in H$, $h'\in H'$.
    \item $U(x,t)\varphi_\kappa(\xi)U(x,t)^{-1}=\varphi_\kappa(U_1(x,t)\xi)$ for any $x\geq0$ and any $t\in\Rl$.
    \item $J\varphi_\kappa(\xi)J=\varphi_{-\kappa}(J_1\xi)$.
    \item $\Om$ is cyclic for the algebra of all polynomials in $\varphi_\kappa(h)$, $h\in H$.
\end{itemize}

These structures are summarized in the following proposition.

\begin{proposition}\label{defprop}
    Let $\kappa\geq0$ and 
    \begin{align}\label{Mkappa}
        \M_\kappa := \{e^{i\varphi_\kappa(h)}\,:\,h\in H\}'' \subset \B(\Hil). 
    \end{align}
    Then $(\M_\kappa,T,\Om)$ is a one-dimensional Borchers triple with unique vacuum on the Bose Fock space $\Hil$ over $\Hil_1$. For $\kappa=0$ we have $\M_0=\M(H)$, the second quantization of the irreducible standard pair $(H,T_1)$.
\end{proposition}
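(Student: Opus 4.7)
The plan is to check each clause of Definition~\ref{def:BT} directly using the five bullet points preceding the proposition together with standard Fock-space facts. Most of the structure is already built in, and what remains is bookkeeping plus one analytic-vector estimate.

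First, the easy clauses. That $\M_0 = \M(H)$ follows by inspection of \eqref{def:ak}: when $\kappa = 0$ the phase factors $e^{i\kappa\sinh(\te-\te_k)}$ disappear, $a_0(\psi)$ reduces to the usual Bose annihilation operator, $\varphi_0(h)$ is the Segal field on $H$, and $e^{i\varphi_0(h)}$ is the Weyl unitary $V(h)$; hence the generators coincide with those of \eqref{eq:MH}. The identity $T(x)\Om = \Om$ is immediate from $T = \Gamma(T_1)$. For $x \geq 0$ and $h \in H$, the covariance bullet (with $t=0$) gives $T(x)\varphi_\kappa(h)T(x)^{-1} = \varphi_\kappa(T_1(x)h)$; since the standard-pair property yields $T_1(x)H \subset H$ for $x \geq 0$, the right-hand side is again a generator, so $T(x)\M_\kappa T(x)^{-1} \subset \M_\kappa$. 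Uniqueness of the vacuum follows from $\Pst = d\Gamma(\Pst_1)$ and $\ker \Pst_1 = \{0\}$, which together force $\ker \Pst = \Cl\Om$ on the Bose Fock space.

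The core point is cyclicity and the separating property of $\Om$ for $\M_\kappa$. The last bullet asserts that $\Om$ is cyclic for the $*$-algebra of polynomials in $\varphi_\kappa(h)$, $h \in H$. To transfer this to the unitary algebra $\M_\kappa$, I would show that $\Om$ is an analytic vector for every $\varphi_\kappa(h)$, so that
\begin{align*}
    e^{it\varphi_\kappa(h)}\Om \;=\; \sum_{n=0}^\infty \frac{(it)^n}{n!}\,\varphi_\kappa(h)^n\Om
\end{align*}
converges in norm for $|t|$ in a neighbourhood of $0$. A standard Vandermonde argument then recovers each monomial $\varphi_\kappa(h)^n\Om$ from finite linear combinations of the vectors $e^{it_k\varphi_\kappa(h)}\Om$, and iterating this step with several $h$'s places every polynomial vector in $\overline{\M_\kappa\Om}$. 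Cyclicity of $\M_\kappa\Om$ follows. For the separating property, I use the $J$-covariance bullet: since $J\Om = \Om$ and $J_1 H = H'$, cyclicity of $\Om$ for the polynomials in $\varphi_\kappa(h)$, $h \in H$, conjugates to cyclicity of $\Om$ for the polynomials in $\varphi_{-\kappa}(h')$, $h' \in H'$. The commutation bullet (which is where $\kappa \geq 0$ is needed) places $\{e^{i\varphi_{-\kappa}(h')} : h' \in H'\}''$ inside $\M_\kappa'$. Repeating the analytic-vector argument now with $-\kappa$ and $H'$ in place of $\kappa$ and $H$ yields $\overline{\M_\kappa'\Om} = \Hil$, i.e.\ $\Om$ is separating for $\M_\kappa$.

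The only non-formal step is the analytic-vector bound $\|\varphi_\kappa(h)^n\Om\| \leq C(h)^n \sqrt{n!}$. Because the deformation factors $e^{i\kappa\sinh(\te-\te_k)}$ in \eqref{def:ak} have modulus one, the $L^2$-estimates controlling iterated applications of $a_\kappa(\psi)^*$ and $a_\kappa(S_1\psi)$ to the Fock vacuum reduce to the undeformed combinatorics; the needed bound is essentially contained in \cite{Lechner:2012}. Checking that this estimate is $\kappa$-uniform (at least locally) is the one place where the interaction between the deformation and the Fock-space combinatorics must be verified; once this is in hand, the remainder of the argument is bookkeeping.
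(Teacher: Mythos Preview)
Your proposal is correct. The paper does not actually supply a proof of this proposition: it presents it as a summary of the five bullet points and defers to \cite{Lechner:2012} (and, implicitly, to the warped-convolution framework of \cite{BuchholzLechnerSummers:2011} recalled in Section~\ref{section:warping}) for the underlying arguments. What you have written is precisely the direct verification that those references carry out, so your approach and the paper's are the same in substance; you have simply made the steps explicit rather than citing them.

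Two minor remarks. First, the phrase ``Vandermonde argument'' is slightly nonstandard here: since the exponential series is infinite, one recovers $\varphi_\kappa(h)^n\Om$ by differentiating $t\mapsto e^{it\varphi_\kappa(h)}\Om$ at $t=0$ (limits of difference quotients stay in the closed subspace $\overline{\M_\kappa\Om}$), and then iterates using that every finite-particle vector---in particular each $\varphi_\kappa(h_2)^{m_2}\cdots\varphi_\kappa(h_k)^{m_k}\Om$---is again an entire analytic vector for $\varphi_\kappa(h_1)$. This is what you intend, and the $\sqrt{n!}$ bound you quote (which is indeed $\kappa$-independent because the deformation phases are unimodular) is exactly what makes it work. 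Second, for the separating property you only need the \emph{inclusion} $\{e^{i\varphi_{-\kappa}(h')}:h'\in H'\}''\subset\M_\kappa'$, which follows directly from the commutation bullet; the paper later records (equation~\eqref{MK'}, from \cite{BuchholzLechnerSummers:2011}) that this is in fact an equality, but you do not need that stronger fact.
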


In the following, we will use a subscript $\kappa$ to refer to the ``deformed'' von Neumann algebra $\M_\kappa$, its subalgebra $\N_\kappa=T(1)\M_\kappa T(-1)$, algebra at infinity $\scrX_\kappa$, local subspace ${\Hilloc}_\kappa$, etc. Note, however, that independently of $\kappa$, we are always working on the same Hilbert spaces $\Hil_1$ and $\Hil$, using the same vacuum vector $\Om$ and the same standard subspace $H$. 

By definition, the translations $T(x)$ are the same for all $(\N_\kappa\subset\M_\kappa,\Om)$, $\kappa\geq0$. We remark that also the modular data of $(\M_\kappa,\Om)$ are independent of~$\kappa$, and that the commutant of $\M_\kappa$ is given by \cite{BuchholzLechnerSummers:2011}
\begin{align}\label{MK'}
    {\M_\kappa}'=\{e^{i\varphi_{-\kappa}(h')}\,:\,h'\in H'\}'' = {\M'}_{-\kappa}.
\end{align}

Our main result is the following theorem.

\begin{theorem}\label{thm:main}
    Consider the family of one-dimensional Borchers triples $(\M_\kappa,T,\Om)$, $\kappa\geq0$, defined in \eqref{Mkappa}. For $\kappa=0$, this inclusion is standard (case \Std), and for any $\kappa>0$, this inclusion is singular (case \Sing).
\end{theorem}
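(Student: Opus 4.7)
The standard-case claim for $\kappa=0$ is immediate: the second quantization of an irreducible standard pair yields a one-dimensional Borchers triple in case~\Std by \cite[Prop.~2.3]{LechnerLongo:2014}, as already noted after \eqref{eq:MH}. All the remaining work lies in showing that $\N_\kappa\subset\M_\kappa$ is singular for every $\kappa>0$.

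My plan is to verify condition \ref{trivlim3} of Proposition~\ref{prop:triviality-with-limits}: exhibit an operator $S\in\N_\kappa\vee J\N_\kappa J$ with $\wlim\limits_{t\to-\infty}\sigma_t(S)=\POm$. The natural building blocks are the bounded generators $e^{i\varphi_\kappa(h)}\in\N_\kappa$ with $h\in T_1(1)H$, together with their $J$-conjugates, which lie in $J\N_\kappa J$ by the transformation rules listed before Proposition~\ref{defprop} and the commutant formula \eqref{MK'}. Under the modular dilation one has $\sigma_t(\varphi_\kappa(\xi))=\varphi_\kappa(\Delta_1^{it}\xi)$, so as $t\to-\infty$ the rapidity supports of the one-particle functions of all fields involved are pushed to $-\infty$. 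Consequently the deformation phases $\prod_k e^{i\kappa\sinh(\te-\te_k)}$ appearing in the kernel of $a_\kappa$ in \eqref{def:ak} become highly oscillatory in the integration variable $\te$; these oscillations are the engine of the required Riemann--Lebesgue cancellation.

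The argument proceeds in two stages. First, for unbounded polynomials $X,Y$ in the field operators $\varphi_\kappa(h)$ and their $J$-conjugates (with $h\in T_1(1)H$ of Schwartz type) acting on the dense domain $\D$, I would compute matrix elements $\langle\Psi,\sigma_t(XY)\Phi\rangle$ between finite-particle vectors. Expanding everything in creation and annihilation kernels, performing the modular shift inside each integral, and applying a Riemann--Lebesgue style estimate using the rapid oscillations of $e^{i\kappa\sinh}$, I would show that every term in which a shifted integration variable remains uncontracted vanishes in the limit, leaving only the contribution in which all fields are contracted against the vacuum. This surviving piece equals $\om(XY)\langle\Psi,\Om\rangle\langle\Om,\Phi\rangle$, identifying the weak limit as a multiple of $\POm$ on $\D$. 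Second, I would lift the result from polynomials to the bounded unitaries $e^{i\varphi_\kappa(h)}$ by truncating their series expansions, bounding the $n$-particle tails uniformly (noting that the deformation contributes only unimodular factors, so the standard Fock-space estimates for the undeformed creation/annihilation operators still apply), and passing the polynomial limit through term by term. Choosing $S$ as a normalized product of such bounded generators with their $J$-conjugates then yields $\wlim\limits_{t\to-\infty}\sigma_t(S)=\POm$, and Proposition~\ref{prop:triviality-with-limits} concludes the proof.

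The main obstacle is the second stage: the series $\sum_n\frac{1}{n!}\varphi_\kappa(h)^n$ does not converge in operator norm, so the polynomial weak limit does not automatically extend to Weyl-like generators. Success requires the Riemann--Lebesgue decay obtained in the first stage to be quantitative and uniform enough in the polynomial degree to dominate the combinatorial growth of the series expansion. Selecting the one-particle functions to be smooth with compact support bounded away from $0$ in position space should provide enough decay in rapidity to make a dominated-convergence argument go through, allowing the exchange of limit and summation that completes the verification.
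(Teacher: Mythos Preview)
Your overall strategy matches the paper's, but Stage~1 contains a concrete error in the limit formula. You claim that the only surviving contribution is the fully vacuum-contracted one, giving $\sigma_t(XY')\to\om(XY')\POm$ weakly on $\D$. This is false already for $X=1$, $Y'=1$, where the left side is $1$ and your right side is $\POm$. What you are missing are the contractions in which the shifted $X$-variables pair only among themselves, the shifted $Y'$-variables pair only among themselves, and the external vectors $\Psi,\Psi'$ pair with each other (not necessarily through the vacuum). These survive because the shifts cancel by substitution and the deformation phases drop out on the delta support; they contribute $\om(X)\om(Y')\langle\Psi',\Psi\rangle$. The correct limit (paper's Theorem~\ref{thm:limit-XY}) is therefore
\[
\om(XY')\POm+\om(X)\om(Y')\POm^\perp,
\]
which is \emph{not} a multiple of $\POm$ for generic $X,Y'$. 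To extract $\POm$ one must first subtract $\om(X)\om(Y')1$ and then normalize by $\om(XY')-\om(X)\om(Y')$; that this coefficient can be made nonzero is a separate (easy) argument that $\om$ is not a product state on $\N_\kappa\vee J\N_\kappa J$ (Lemma~\ref{lemma:noproductstate}).

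For Stage~2 the paper does not attempt to control the exponential series uniformly in the degree, and you are right to flag this as the weak point of your plan: no quantitative rate comes out of the Riemann--Lebesgue step, so dominated convergence across the series is not available. Instead the paper exploits a symmetry that you have not noticed: for $A\in\M_\kappa$, $B\in\M_\kappa'$ bounded and $L,R$ (closed, affiliated) field polynomials on the opposite sides, one has the identity
\[
\langle\sigma_t(A^*)\Psi_{L^*},\sigma_t(B)\Psi_R\rangle-\langle\Psi_{L^*},P_{AB}\Psi_R\rangle
=
\langle\sigma_{-t}(L^*)\Psi_{A^*},\sigma_{-t}(R)\Psi_B\rangle-\langle\Psi_{A^*},P_{LR}\Psi_B\rangle,
\]
which swaps the roles of the bounded pair and the polynomial pair. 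The right-hand side is controlled by Stage~1 (applied to $L,R$) once the vectors $\Psi_{A^*},\Psi_B$ are approximated by finite-particle vectors, using only the simple number-operator bound $\|\phi_{\pm\kappa}(f)N^{-1/2}\|<\infty$ and smoothness of $A,B$. This avoids any summation of the exponential series and is the key device you are missing.
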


This results highlights that despite the properties that the triples $(\M_\kappa,T,\Om)$ share for all $\kappa\geq0$, there are also essential differences between the undeformed $(\kappa=0)$ and deformed ($\kappa>0$) cases. Not only are the von Neumann algebras~$\M_\kappa$ {\em not} of second quantization form for $\kappa>0$, but their local subspaces~${\Hilloc}_\kappa$ and relative commutants ${\N_\kappa}'\cap\M_\kappa$ depends on $\kappa$ in a ``discontinuous'' manner. Whereas the triple for $\kappa=0$ describes a (chiral half of) a local QFT, the local observable content becomes trivial for $\kappa>0$.

It is also interesting to consider this situation from the point of view of the representations of the affine and Möbius groups. In the situation encountered here, the modular data $\Delta,J$ of $(\M_\kappa,\Om)$ are independent of the deformation parameter $\kappa$. So the representation $U$ of $G$ (``$ax+b$ group'') extends to a representation $\hat{U}_0$ of the Möbius group under which the net $\CA_0$ associated with $(\N_0\subset\M_0,\Om)$ transforms covariantly; this extension can be realized with the modular data of $\CA_0(I)$, $I\in\CI$. But the net associated with $(\N_\kappa\subset\M_\kappa,\Om)$, $\kappa>0$, is trivial, in particular $\hat U_0(g)$ does not map $\M_\kappa$ onto $\CA_\kappa((0,1))$ for $g(x)=\frac{1}{x+1}$. Hence, contrary to a statement sometimes found in the literature, extension of $G$ to a Möbius group representation is not sufficient for the existence of local observables -- one also has to ensure that the extended representation acts correctly on the net. 

This circle of ideas might connect to ongoing research relating modular theory of standard subspaces and representation theory \cite{NeebOlafsson:2017,NeebOlafsson:2021}.

\subsection{Two-dimensional nets and warped convolution}\label{section:warping}

In this subsection we describe the deformations from a different perspective that highlights their structural properties and explains the formula \eqref{def:ak}. The main point is that we may view the one-dimensional Borchers triple $(\M(H),T,\Om)$ also as a {\em two}-dimensional one. This point of view will give us sufficient room to work with a deformation scheme (warped convolution) that requires a suitable action of $\Rl^2$, see also \cite{Tanimoto:2011-1,LechnerSchlemmerTanimoto:2013} for related constructions.

\medskip 

By definition, every one-dimensional Borchers triple $(\M,T,\Om)$ comes with a representation $T$ of the one-dimensional translation group that acts according to $T(x)\M T(-x)\subset\M$ for $x\geq0$ (Def.~\ref{def:BT}). In comparison, a {\em two-dimensional Borchers triple} $(\M,\underline{T},\Om)$ is defined as a von Neumann algebra $\M$ with standard vector $\Om$ and a unitary representation $\underline{T}$ of the two-dimensional translation group, such that $\Om$ is invariant under $\underline{T}$,
\begin{align}
 \underline{T}(x,y)\M \underline{T}(x,y)^{-1} \subset \M,\qquad x\geq0,y\leq0,
\end{align}
and both one-parameter groups $\underline{T}(x,0)$ and $\underline{T}(0,y)$ have positive generators. 

The parameters $x,y$ should be thought of as the light ray coordinates of vectors $\xi\in\Rl^2$, namely $x=\xi_-$, $y=\xi_+$ with $\xi_\pm=\frac{1}{2}(\xi_0\pm\xi_1)$, where $\xi_0$ is the temporal and $\xi_1$ the spatial coordinate of $\xi$. Then $\underline{T}$ is a positive energy representation and the inequalities $x>0,y<0$ describe the right wedge in $\Rl^2$, namely $W_R=\{\xi\in\Rl^2\,:\,\pm\xi_\pm>0\}$.

\begin{center}
    \includegraphics[width=50mm]{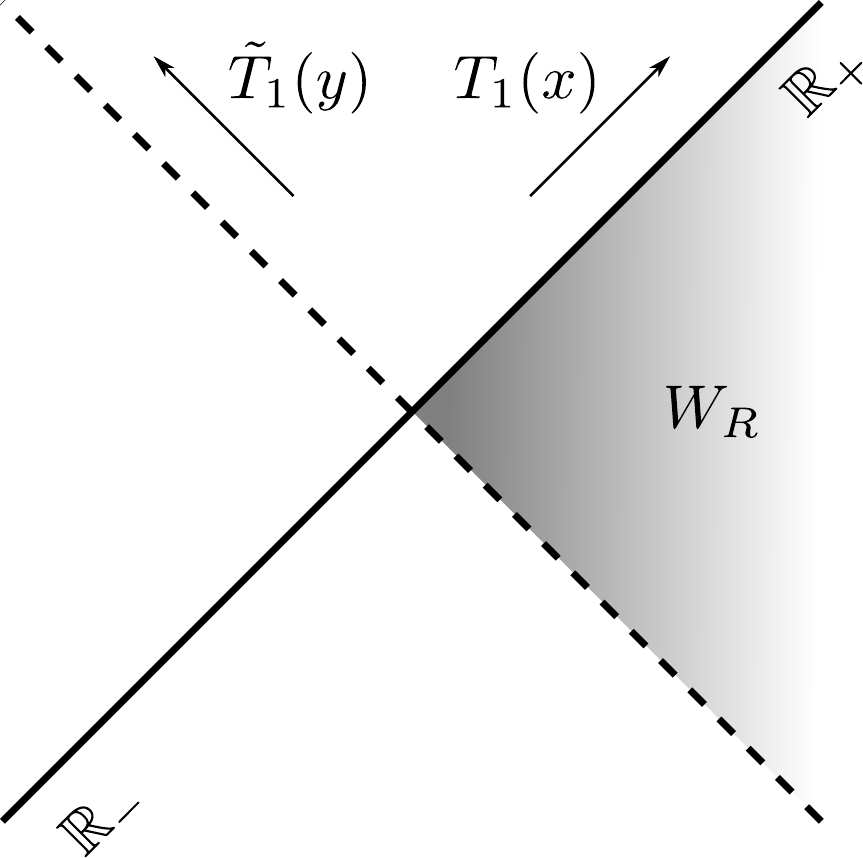}
\end{center}

Clearly, any two-dimensional Borchers triple is also a one-dimensional one by setting $T(x):=\underline{T}(x,0)$. Conversely, a one-dimensional Borchers triple is also a two-dimensional one by making the trivial choice $\underline{T}(x,y):=T(x)$. 

For one-dimensional Borchers triples $(\M,T,\Om)$ arising from second quantization of standard pairs $(H,T_1)$, there exist however always non-trivial extensions of $T$ to a representation $\underline{T}$ of $\Rl^2$. In \cite[Cor.~2.5]{LongoWitten:2010}, all strongly continuous one-parameter groups $\tilde T_1(y)$ commuting with $T_1(x)$ and satisfying $\tilde T_1(y)H\subset H$ for $y\leq0$ have been classified. These one-parameter groups include in particular\footnote{This is also easily checked directly in case of the standard pair \eqref{def:H}.}
\begin{align}\label{T1t}
 \tilde T_1(y) = e^{im^2y\Pst_1^{-1}},
\end{align}
where $\Pst_1$ is the generator of $T_1$, and $m^2\geq0$ a parameter with the interpretation of mass (squared). We remark that \eqref{T1t} are the only possibilities with positive generator in case $(H,T_1)$ is irreducible: In fact, assuming that the generator $\tilde \Pst_1$ of $\tilde T_1$ is positive, Borchers' commutation relations imply that the product $\Pst_1\tilde \Pst_1$ commutes with $U$ and is thus a multiple of the identity by irreducibility. 

Since the value of $m^2>0$ in \eqref{T1t} plays no role for our investigations, we will set it to $m^2=1$. 

In the case of the presentation \eqref{def:H} of the irreducible standard pair, $\tilde T_1$ takes the form 
\begin{align}\label{eq:2Ps}
    (\tilde T_1(y)\psi)(\te) = e^{iye^{-\te}}\psi(\te),
\end{align}
and we can summarize our discussion as follows.

\begin{lemma}\label{lemma:2dinterpretation}\leavevmode
    \begin{enumerate}
     \item\label{item:rep1} The unitary $\Rl^2$-representation on $L^2(\Rl,d\te)$ given by\footnote{Here, ``$\cdot$'' denotes the Minkowski inner product.}
     \begin{align}\label{eq:2drep}
        (\underline{T}_1(\xi)\psi)(\te) = e^{ip(\te)\cdot \xi}\psi(\te),\qquad \xi\in\Rl^2,\qquad p(\te):=(\cosh\te,\sinh\te)
    \end{align}
    has positive energy and satisfies 
    \begin{align}\label{eq:2dincl}
        \underline{T}_1(\xi)H\subset H,\qquad \xi\in W_R.
    \end{align}
    It is related to $T_1$ by $\underline{T}_1(\xi)=T_1(\xi_-)\tilde T_1(\xi_+)$, where $\xi_\pm=\frac{1}{2}(\xi_0\pm\xi_1)$.
    \item\label{item:rep3} The second quantization von Neumann algebra $\M(H)$ \eqref{eq:MH}, the second quantization $\underline{T}$ of $\underline{T}_1$, and the Fock vacuum $\Om$ form a two-dimensional Borchers triple. 
    \end{enumerate}
\end{lemma}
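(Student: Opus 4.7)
The plan is to verify both assertions by direct computation in the rapidity representation \eqref{eq:rapidityrep}--\eqref{def:H}, taking as the one nontrivial input the commuting half-sided one-parameter group $\tilde T_1$ in the form \eqref{T1t}--\eqref{eq:2Ps} recalled just before the statement.

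For part \ref{item:rep1}, I would first note that since $e^{ip(\te)\cdot\xi}$ is unimodular and linear in $\xi$ for each $\te$, the formula \eqref{eq:2drep} defines a strongly continuous unitary multiplication representation of $\Rl^2$. Passing to light-cone coordinates and using $p_\pm(\te)=\tfrac12 e^{\pm\te}$ together with the identity $p(\te)\cdot\xi = e^\te\xi_- + e^{-\te}\xi_+$ (a one-line computation from $p\cdot\xi=p_0\xi_0-p_1\xi_1$) factorizes
\begin{align*}
    (\underline T_1(\xi)\psi)(\te) = e^{i\xi_- e^\te}\cdot e^{i\xi_+ e^{-\te}}\psi(\te) = (T_1(\xi_-)\tilde T_1(\xi_+)\psi)(\te),
\end{align*}
matching \eqref{eq:rapidityrep} and \eqref{eq:2Ps}. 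Positive energy in the sense used after Def.~\ref{def:BT} (positivity of each of the two lightlike one-parameter subgroups) then reduces to the already-known positivity of the generators $\Pst_1$ and $\Pst_1^{-1}$ of $T_1$ and $\tilde T_1$. Finally, for $\xi\in W_R$ we have $\xi_-\geq 0$ and $\xi_+\leq 0$, so $T_1(\xi_-)H\subset H$ by \eqref{eq:T1halfsided} and $\tilde T_1(\xi_+)H\subset H$ by the property of $\tilde T_1$ recalled from \cite{LongoWitten:2010}; composing yields \eqref{eq:2dincl}.

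For part \ref{item:rep3}, I would lift these one-particle facts to the Bose Fock space via second quantization $\underline T=\Gamma(\underline T_1)$. The vacuum $\Om$ is manifestly invariant under $\underline T$ and is cyclic and separating for $\M(H)$ by the standard theorem on second quantization of standard subspaces already used for $(\M(H),T,\Om)$. The Weyl intertwining relation $\underline T(\xi)V(h)\underline T(\xi)^{-1}=V(\underline T_1(\xi)h)$, automatic from unitarity of $\underline T_1$ (which in particular preserves the symplectic form $\Im\langle\cdot,\cdot\rangle$), transports the inclusion \eqref{eq:2dincl} into the algebra inclusion $\underline T(\xi)\M(H)\underline T(\xi)^{-1}\subset\M(H)$ for $\xi\in W_R$. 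Positivity of the two lightlike generators of $\underline T$ follows from positivity at the one-particle level via the general fact $d\Gamma(A)\geq 0$ whenever $A\geq 0$.

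None of the individual steps is a genuine obstacle; the argument amounts to combining the Longo--Witten existence of $\tilde T_1$ with standard second-quantization functoriality. The only point requiring care is keeping track of the Minkowski/light-cone change of variables so that the factorization $\underline T_1(\xi)=T_1(\xi_-)\tilde T_1(\xi_+)$ is read off correctly from \eqref{eq:2drep} and then matched against the half-sided property \eqref{eq:T1halfsided} on the $\xi_-$ factor and the Longo--Witten half-sided property on the $\xi_+$ factor.
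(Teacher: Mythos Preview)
Your proposal is correct and follows essentially the same approach as the paper: the paper's proof simply states that \eqref{eq:2drep} follows from \eqref{eq:2Ps} via light ray coordinates with positivity being clear, and that part~\ref{item:rep3} follows immediately from part~\ref{item:rep1} by second quantization. Your argument is a faithful (and more explicit) expansion of exactly this line of reasoning.
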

\begin{proof}
    \ref{item:rep1} Formula \eqref{eq:2drep} follows from \eqref{eq:2Ps} by using light ray coordinates, and positivity is clear. \ref{item:rep3} follows immediately from \ref{item:rep1} by second quantization.
\end{proof}

In the setting of a two-dimensional Borchers triple, $\underline{T}(\xi)$, $\xi\in\Rl^2$ and $\Delta^{it}$, $t\in\Rl$, generate a representation of the Poincaré group $\PG$ on two-dimensional Minkowski space $\Rl^2$.

\bigskip 

Once we are in a situation of a two-dimensional Borchers triple $(\M,\underline{T},\Om)$, we are in position to apply the warped convolution deformation \cite{BuchholzLechnerSummers:2011}. This is a deformation procedure for two- (or higher-) dimensional Borchers triples based on a deformation of smooth operators $A\in\CB(\Hil)$. Here we call an operator $A\in\B(\Hil)$ {\em smooth} if the functions $x\mapsto \underline{T}(x)A \underline{T}(-x)$ are smooth in the norm topology of $\B(\Hil)$. A vector $\Psi\in\Hil$ is called {\em smooth} if $x\mapsto \underline{T}(x)\Psi$ is smooth in the norm topology of $\Hil$.

Choosing a real $(2\times 2)$ matrix $Q$ that is antisymmetric w.r.t. the Minkowski inner product, i.e. necessarily of the form $$Q=Q_\kappa=\begin{pmatrix}0&\kappa\\\kappa&0                                              \end{pmatrix},$$ for some $\kappa\in\Rl$, one then considers the integral expression (for smooth $A$ and~$\Psi$)
\begin{align}\label{warp}
    A_\kappa \Psi := (2\pi)^{-2}\int dp\,dx\,e^{-ip\cdot x}\underline{T}(Q_\kappa p)A\underline{T}(-Q_\kappa p)\,\underline{T}(x)\Psi.
\end{align}
Interpreted in an oscillatory sense, this integral defines a map $A_\kappa:\Hil^\infty\to\Hil^\infty$ that can be extended to a bounded operator, still denoted $A_\kappa$. It is easy to see that $A_0=A$ and $A_\kappa\Om=A\Om$; for further properties we refer to \cite{BuchholzLechnerSummers:2011}.

In the context of the Borchers triple $(\M,\underline{T},\Om)$, one then defines 
\begin{align}\label{Mk2}
    \M_\kappa := \{A_\kappa\,:\,A\in\M\,\text{ smooth}\}''.
\end{align}
Note that as $A\mapsto A_\kappa$ is not an algebra homomorphism, this algebra contains also elements that are not of the form $B_\kappa$, $B\in\M$.

By exploiting the positivity of the joint spectrum $S\subset\Rl^2$ of the generators of $\underline{T}$ one can then show: If $Q_\kappa S$ lies in the right wedge $W_R$, then $\Om$ is standard for $\M_\kappa$. This condition is equivalent to $\kappa\geq0$ and explains this restriction on the parameter in Prop.~\ref{defprop}. Furthermore, one can show that the commutant of $\M_\kappa$ is given by the opposite deformation of the commutant of $\M$ \eqref{MK'}. Since the action of the translations on deformed operators $A_\kappa$ is easy to control, it then follows that also $(\M_\kappa,\underline{T},\Om)$ is a Borchers triple \cite{BuchholzLechnerSummers:2011}. 

The initial deformation formula \eqref{def:ak} for the annihilation operator $a_\kappa(\xi)$ amounts to using the representation $\underline{T}$ in the warped convolution and checking $a_\kappa(\xi)=a_0(\xi)_\kappa$. The exponential factors involving $\sinh$ appear in \eqref{def:ak} because 
\begin{align}\label{pQp}
    p(\te)\cdot Q_\kappa p(\te') = \kappa\sinh(\te'-\te).
\end{align}

We also recall that elements of $H$ can be generated from test functions on $\Rl^2$, completing the two-dimensional picture. Namely, the maps 
\begin{align}
    \Ss(\Rl^2) \ni f\mapsto f^\pm\in\Hil_1,\qquad f^\pm(\te)=\int_{\Rl^2} f(\xi)e^{\pm ip(\te)\cdot\xi}d\xi
\end{align}
generates $H$ according to
\begin{align}\label{eq:H2d}
    H = \{f^+\,:\, f\in C_{c,\Rl}^\infty(W_R)\}^-.
\end{align}  

The operator-valued distribution $\phi_0$ on $\Rl^2$ defined by
\begin{align}
    \phi_0(f) := a_0^*(f^+)+a_0(\overline{f^-}),\qquad f\in\Ss(\Rl^2),
\end{align}
is nothing but the free scalar Klein Gordon field of unit mass on two-dimensional Minkowski space; it coincides with $\varphi_0(f^+)$ for $f\in C_{c,\Rl}(W_R)$. For any $\kappa\geq0$, the operators $\phi_\kappa(f)$, $f\in C_c^\infty(W_R)$, are affiliated with $\M_\kappa$.

\medskip 

To summarize, the definition \eqref{Mkappa} amounts to considering the free Klein-Gordon field in two dimensions, performing a warped convolution deformation with parameter $Q_\kappa$, and restricting the deformed quantum field theory to the light ray. For physical interpretations of such a light ray holography (in a somewhat different context), see for instance \cite{Schroer:2001-1}. It should also be noted that the deformed field theory (and related models, including models in higher dimensions) are interacting in the sense of having a non-trivial S-matrix; see \cite{GrosseLechner:2007} for two-particle scattering and \cite{Duell:2018,Duell:2019} for an analysis of $n$-particle scattering.

\subsection{Analysis of the deformed inclusions}\label{section:proofs}

This section is devoted to the proof of Theorem~\ref{thm:main} by verifying the weak limit criterion in Prop.~\ref{prop:triviality-with-limits}~\ref{trivlim3}, namely the existence of $S\in{\N_\kappa}\vee J\N_\kappa J$ such that $\sigma_t(S)\to\POm$ weakly as $t\to-\infty$. We will proceed in two steps: In a first step, we will consider field polynomials as an unbounded analogue of $S$, and prove a corresponding limit formula by investigating their correlation functions. In the second step, we will pass to bounded $S$. 

In preparation for the first step we recall Wightman type properties of the fields $\phi_\kappa$, $\kappa\in\Rl$, see \cite{GrosseLechner:2008} for details. 

We will use the vectors $\Psi(F)$, $F\in\Ss((\Rl^2)^n)$, that are defined by linear and continuous extension (kernel theorem) of $\Psi(f_1\ot\ldots\ot f_n):=\phi_0(f_1)\cdots\phi_0(f_n)\Om$, $f_1,\ldots,f_n\in\Ss(\Rl^2)$. On these vectors, $\phi_\kappa$ acts according to \cite{Soloviev:2008,GrosseLechner:2008}
\begin{align}\label{eq:phikappa}
    \phi_\kappa(g)\Psi(F)&=\Psi(g\ot_\kappa F),\qquad g\in\Ss(\Rl^2),\;F\in\Ss(\Rl^{2n}),
\end{align}
where the deformed tensor product $\ot_\kappa$ between two test functions $G\in\Ss(\Rl^{2m})$, $F\in\Ss(\Rl^{2n})$, $n,m\in\Nl$, is defined in Fourier space by
\begin{align}\label{eq:tensor}
    \widetilde{(G\ot_\kappa F)}(p_1,\ldots,p_m;q_1,\ldots,q_n)
    =
    e^{i\sum\limits_{l=1}^mp_l \cdot Q_\kappa \sum\limits_{r=1}^n q_r}
    \cdot
    \tilde{G}(p_1,\ldots,p_m)\tilde{F}(q_1,\ldots,q_n).
\end{align}
For $\kappa=0$, this reduces to the ordinary tensor product $\ot=\ot_0$. For every $\kappa\in\Rl$, the deformed tensor product $\ot_\kappa$ is an associative continuous product on the tensor algebra over $\Ss(\Rl^2)$ which is Poincar\'e-covariant in the sense 
\begin{align}
    \la^*(G\ot_\kappa F)
    =\la^* G\ot_\kappa \la^* F,\qquad \la\in\PG,\quad F\in\Ss(\Rl^{2n}),G\in\Ss(\Rl^{2m}).
\end{align}
While translation covariance is obvious from \eqref{eq:tensor}, covariance under boosts $\La_t$ follows because $Q_\kappa$ is boost invariant \cite{GrosseLechner:2007}.

As a consequence of translation covariance of $\ot_\kappa$ and translation invariance of the vacuum state, we have in particular 
\begin{align}\label{eq:translationinvariance}
 \om(F\ot_\kappa G) = \om(F\ot G)\,.
\end{align}

We also recall that the vacuum state is given by $n$-point functions the structure of which is fixed by Wick's Theorem. Namely, we have 
\begin{align}\label{vac}
    \langle\Om,\Psi(F)\rangle
    =
    \langle\Psi(F^*),\Om\rangle
    &=
    W_n(F),\qquad F\in\Ss(\Rl^{2n}),
\end{align}
where $F^*(x_1,\ldots,x_n)=\overline{F(x_n,\ldots,x_1)}$ and the $n$-point function $W_n\in\Ss'(\Rl^{2n})$ is best described by its Fourier transform
\begin{align}\label{eq:quasifree}
    \tilde{W}_n(p_1,\ldots,p_n)
    &=
    \begin{cases}
        0 & n \text{ odd}\\
        \sum\limits_{(\la,\mu)}\prod\limits_{k=1}^{n/2} \tilde{W}_2(p_{\la_k},p_{\mu_k}) & n \text{ even}
    \end{cases}
    .
\end{align}
Here the two-point function is given by, $p=(p^0,p^1),q=(q^0,q^1)\in\Rl^2$
\begin{align}
    \tilde{W}_2(p,q) = \frac{1}{\eps(p^1)}\,\delta(p^0-\eps(p^1))\,\delta(p+q),\qquad \eps(p^1)=\sqrt{(p^1)^2+1}
\end{align}
The sum $\sum_{(\la,\mu)}$ in \eqref{eq:quasifree} runs over all partitions $(\la,\mu)$ of $\{1,\ldots,n\}$ into $\frac{n}{2}$ disjoint pairs $(\la_k,\mu_k)$, $k=1,\ldots,n/2$, with $\la_k<\mu_k$. We will refer to the partitions $(\la,\mu)=\{(\la_1,\mu_1),\ldots,(\la_{n/2},\mu_{n/2})\}$ and their parts $(\la_k,\mu_k)$ as {\em contractions}.

\begin{theorem}\label{thm:limit-XY}
    Let $\kappa\neq0$. Let $X$ be a polynomial in the field operators $\phi_\kappa(f)$, $f\in\Ss(\Rl^2)$ and $Y'$ a polynomial in the field operators $\phi_{-\kappa}(g)$, $g\in\Ss(\Rl^2)$. Then, for any vectors $\Psi,\Psi'$ of finite particle number, we have 
    \begin{align}\label{eq:limit-formula}
        \lim_{t\to\pm\infty}\langle\Psi',\sigma_t(XY')\Psi\rangle 
        &= 
        \langle\Psi',\left(\om(XY')P_\Om+\om(X)\om(Y')\POm^\perp\right)\Psi\rangle.
    \end{align}
\end{theorem}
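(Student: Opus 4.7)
The plan is to expand the matrix element $\langle \Psi', \sigma_t(XY') \Psi\rangle$ as a sum of Wick contractions of the undeformed quasi-free vacuum functional, carrying along the $\kappa$-dependent deformation phases, and to identify the contractions that survive in the limit. By sesquilinearity and density in $\D$, it suffices to verify the identity for $\Psi = \Psi(F)$ and $\Psi' = \Psi(F')$ with $F, F'$ finite tensor products of Schwartz functions. By Borchers' theorem, $\sigma_t$ acts geometrically on $\phi_{\pm\kappa}$ via the boost of rapidity $2\pi t$, so $\sigma_t(X)$ and $\sigma_t(Y')$ are the same polynomials in the fields with boost-shifted arguments, which I will denote $g_i^{(t)}$ and $h_j^{(t)}$. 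Applying $\sigma_t(X)\sigma_t(Y')$ to $\Psi(F)$ via \eqref{eq:phikappa} and pairing against $\Psi(F')$, the matrix element becomes the undeformed $N$-point function $W_N$ applied to a nested twisted tensor product of the arguments $f'_j, g_i^{(t)}, h_k^{(t)}, f_l$; by Wick's theorem \eqref{eq:quasifree} it expands as a sum over perfect matchings $\pi$ of the $N = q+m+n+p$ indices, each contributing $\prod_{(a,b)\in\pi} W_2$ times a deformation phase $\exp\bigl(i\sum_{i<j} p_i \cdot Q_{\kappa_i} p_j\bigr)$, with $\kappa_i \in \{0, \pm\kappa\}$ tagging each field.

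Two vanishing mechanisms then cut the sum. First, whenever $\pi$ contains a pair with one member boosted and the other unboosted, the corresponding factor $W_2$ takes the form $\int \tilde g(p(\theta - 2\pi t))\,\tilde f(-p(\theta))\, d\theta$ in rapidity, whose Schwartz supports separate at linear speed in $t$ and produce decay faster than any power of $t^{-1}$. Second, for the remaining ``balanced'' matchings---boosted paired with boosted, unboosted with unboosted---impose momentum conservation $p_{b_k} = -p_{a_k}$ on each pair and reduce the deformation phase to a bilinear form in the independent pair momenta. A direct bookkeeping using the block structure $\Psi'\text{-fields} \mid g^{(t)}\text{-fields} \mid h^{(t)}\text{-fields} \mid \Psi\text{-fields}$ shows that the coefficient of an unboosted pair momentum $P_l$ vanishes automatically when both endpoints of the pair lie in the same ($\Psi'$- or $\Psi$-) block, and otherwise equals a nonzero multiple of $Q_\kappa \sum_{(g,h)\text{ cross}} p_g$ after the boosted contractions have been imposed. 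When this coefficient does not vanish, the phase oscillates like $e^{i\kappa \sinh(\theta_l - 2\pi t)}$ in the unboosted rapidity $\theta_l$, and the Riemann--Lebesgue lemma applied to the Schwartz factors $\tilde f_l, \tilde f'_l$ forces the contribution to zero; this is precisely where the hypothesis $\kappa \neq 0$ enters.

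The surviving contractions therefore split into two families: (A) unboosted pairings entirely internal to $\Psi'$ or $\Psi$ (no cross $f'$-$f$ pair), combined with arbitrary boosted matchings, which sum via the Wick formulas for $\om(XY')$ and for $\langle \Psi', \Om\rangle \langle \Om, \Psi\rangle$ to $\om(XY')\,\langle \Psi', \POm \Psi\rangle$; and (B) unboosted pairings containing at least one cross $f'$-$f$ pair, combined only with internal $g$-$g$ and $h$-$h$ boosted matchings, which sum to $\om(X)\om(Y')\bigl(\langle \Psi', \Psi\rangle - \langle \Psi', \POm \Psi\rangle\bigr) = \om(X)\om(Y')\,\langle \Psi', \POm^\perp \Psi\rangle$. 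Adding the two families yields \eqref{eq:limit-formula}. The hard part will be the phase analysis in the second mechanism: one must track exactly which pair momenta occur with nonzero coefficient after momentum conservation is imposed, and then secure a Riemann--Lebesgue decay that is uniform in the remaining (boosted) variables, in order to interchange the limit with the finite sum over contractions.
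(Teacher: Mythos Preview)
Your proposal is correct and follows essentially the same route as the paper: expand the matrix element via Wick's theorem for the quasi-free vacuum, eliminate boosted--unboosted contractions by rapidity separation (the paper's type~(I)) and the contractions that cross both the $l$--$r$ and the $f$--$g$ blocks by an oscillatory-integral argument (the paper's type~(III), where $\kappa\neq0$ enters), then regroup the surviving terms---your families (A) and (B) are just a different partition of the paper's types~(II) and~(IV)---to obtain the claimed limit. The paper realises your Riemann--Lebesgue step via an explicit integration by parts against the $\sinh$-phase (the oscillation is not linear in rapidity), which is precisely the ``hard part'' you flag at the end.
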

\begin{proof}
    As the left and right hand sides of \eqref{eq:limit-formula} are linear in $X$ and $Y'$, it is sufficient to consider field monomials, namely 
    \begin{align}
      X=\phi_\kappa(f_1)\cdots\phi_\kappa(f_n),\qquad Y'=\phi_{-\kappa}(g_1)\cdots\phi_{-\kappa}(g_m),
    \end{align}
    where $n,m\in\Nl_0$, and the functions $f_1,\ldots,f_n,g_1,\ldots,g_m\in C_c^\infty(\Rl)$ are arbitrary. 
    
    Furthermore, it is sufficient to consider vectors $\Psi,\Psi'$ that are not only of finite particle number but also in the Wightman domain $\D_W$ of the undeformed field $\phi_0$, namely 
    \begin{align}\label{eq:psipsi'}
        \Psi' = \Psi(l^*),\qquad \Psi=\Psi(r),
    \end{align}
    where $l\in\Ss(\Rl^{2a})$, $r\in\Ss(\Rl^{2b})$, and $a,b\in\Nl_0$ are arbitrary.
    
    Since the intersection $\D_W\cap\Hil_k$ with any $k$-particle space is dense in $\Hil_k$, and $\sigma_t(XY')$ is uniformly bounded on $\Hil_k$, it follows that \eqref{eq:limit-formula} holds for arbitrary finite particle number vectors $\Psi,\Psi'$ if we can establish it for $\Psi,\Psi'$ of the form \eqref{eq:psipsi'}.

    In the interest of a compact formula for the scalar product $\langle\Psi',\sigma_t(XY')\Psi\rangle$ in question \eqref{eq:limit-formula}, we introduce shorthand notations for the following Schwartz functions:
    \begin{align*}
        f^{\kappa,t}
        &:=
        \La_t^*f_1 \ot_\kappa \ldots \ot_\kappa \La_t^* f_n,\qquad 
        g^{-\kappa,t}
        :=
        \La_t^*g_1\ot_{-\kappa} \ldots \ot_{-\kappa}\La_t^* g_m.
     \end{align*}
    With these notations, we may repeatedly apply \eqref{eq:phikappa}, the associativity of $\ot_\kappa$, and \eqref{eq:translationinvariance}, to get
    \begin{align}\label{eq:WM1}
        \langle\Psi',\sigma_t(XY')\Psi\rangle
        &=
        \langle\Psi(l^*),\phi_\kappa(\La_t^* f_1)\cdots\phi_\kappa(\La_t^* f_n )\phi_{-\kappa}(\La_t^* g_1)\cdots\phi_{-\kappa}(\La^*_tg_m)\Psi(r)\rangle
        \\
        &=
        W_{n+m+a+b}(l\ot(f^{\kappa,t}\ot_\kappa(g^{-\kappa,t}\ot_{-\kappa}r)))\nonumber
        \\
        &=
        W_{n+m+a+b}((l\ot_\kappa f^{\kappa,t})\ot(g^{-\kappa,t}\ot_{-\kappa}r))) \nonumber
        .
    \end{align}
    For odd $N:=n+m+a+b$, \eqref{eq:WM1} vanishes. By distinguishing a few even/odd cases it is easy to see that in this case, the right hand side of \eqref{eq:limit-formula} also vanishes. For example, $\langle\Psi',\Om \rangle\langle\Om,\Psi\rangle\om(XY')$ can only be non-zero if $n+m$, $a$, and $b$ are all even, which is incompatible with $N$ being odd; and similar for the other terms in the right hand side of \eqref{eq:limit-formula}.
    
    For even $N$, \eqref{eq:WM1} equals a sum over contractions, namely
    \begin{align}
        \langle\Psi,\sigma_t(XY')\Psi'\rangle
        &=
        \sum_{(\la,\mu)} W_{(\la,\mu)}(t),
        \nonumber
        \\
        W_{(\la,\mu)}(t)
        &:=
        \int_{(\Rl^2)^N} dp 
        \,
        \tilde{l}(p_1,\ldots,p_a)\widetilde{f^{\kappa}}(\La_t p_{a+1},\ldots,\La_t p_{a+n})
        \nonumber
        \\
        &\qquad\qquad\quad\times \widetilde{g^{-\kappa}}(\La_t p_{a+n+1},\ldots,\La_t p_{a+n+m})\tilde{r}(p_{a+n+m+1},\ldots,p_N)
        \label{C}
        \\
        &\qquad\qquad\quad
        \times e^{ip(l)\cdot Q_\kappa p(f)}e^{-ip(g)\cdot Q_\kappa p(r)}\prod_{k=1}^{N/2} \tilde{W}_2(-p_{\la_k},-p_{\mu_k})
        \nonumber
        .
    \end{align}
    The exponential terms originate from the deformed tensor products \eqref{eq:tensor}, and as a shorthand notation, we have introduced the sums of the momenta $p_k\in\Rl^2$ appearing in the four functions, 
    \begin{align}\label{eq:momentumsums}
        p(l)=\sum_{k=1}^ap_k,\qquad p(f)=\sum_{k=a+1}^{a+n}p_k,\qquad p(g)=\sum_{k=a+n+1}^{a+n+m}p_k,\qquad p(r)=\sum_{k=a+n+m+1}^Np_k.
    \end{align}    
    For any contraction $(\la,\mu)$, the above integrand goes to zero pointwise as $t\to\pm\infty$. The $t$-dependence of the integrals $W_{(\la,\mu)}(t)$ depends however on the structure of $(\la,\mu)$, so that we have to distinguish a few different cases.
    
    To this end, we introduce the index sets
    \begin{align*}
        \I(l) &:= \{1,\ldots,a\},\\
        \I(f) &:= \{a+1,\ldots,a+n\},\\
        \I(g) &:= \{a+n+1,\ldots,a+n+m\},\\
        \I(r) &:= \{a+n+m+1,\ldots,a+n+m+b\},
    \end{align*}
    corresponding to variables of $\tilde l$, $\tilde f^\kappa$, $\tilde g^{-\kappa}$, and $\tilde r$, respectively.
    \begin{enumerate}
        \item[(I)] A contraction $(\la,\mu)$ is of type (I) if there exists $k\in\{1,\ldots,N/2\}$ such that either $\la_k$ or $\mu_k$ (but not both) lie in $\I(l)\cup\I(r)$. In this case, the contraction connects a variable of $\tilde f^\kappa$ or $\tilde g^{-\kappa}$ to a variable of $\tilde l$ or $\tilde r$.
        \item[(II)] A contraction $(\la,\mu)$ is of type (II) if it is not of type (I) and if for all $k\in\{1,\ldots,N/2\}$, the statement $\la_k\in\I(f)$ is equivalent to $\mu_k\in\I(f)$. Then also $\la_k\in\I(g)$ is equivalent to $\mu_k\in\I(g)$. In this case, we call the contraction $(\la_k,\mu_k)$ $f${\em-internal} and $g${\em-internal}, respectively.
        \item[(III)] A contraction $(\la,\mu)$ is of type (III) if it is not of type (I) or type (II), and if there exists $k\in\{1,\ldots,N/2\}$ such that $\la_k\in\I(l)$ and $\mu_k\in\I(r)$. In this case, $(\la,\mu)$ connects $\tilde f^\kappa$ and $\tilde g^{-\kappa}$, and also $\tilde l$ and $\tilde r$, but there are no contractions between $\tilde l,\tilde r$ and $\tilde f^\kappa,\tilde g^{-\kappa}$ as in type (I).
        \item[(IV)] A contraction $(\la,\mu)$ is of type (IV) if it is not of type (I), (II), or (III). In this case $(\la,\mu)$ connects $\tilde f^\kappa$ and $\tilde g^{-\kappa}$, but all variables of $\tilde l$ are contracted amongst themselves, and likewise for $\tilde r$.
    \end{enumerate}
    Clearly these cases are mutually exclusive and exhaust all possibilities.     

    \medskip
    
    \noindent {\bf (I)} Beginning our analysis of the four cases, we claim that 
    \begin{align}
        \lim_{t\to\pm\infty}W_{(\la,\mu)}(t)=0,\qquad (\la,\mu) \text{ of type (I)}.
    \end{align}
    In this case, the exponential factors $e^{ip(l)\cdot Q_\kappa p(f)}e^{-ip(g)\cdot Q_\kappa p(r)}$ in \eqref{C} are irrelevant and we may estimate \eqref{C} by triangle inequality. After carrying out all integrations over the delta distributions in the two-point functions $\tilde W_2(p,q)=\eps(p^1)^{-1}\delta(p^0-\eps(p^1))\delta(p+q)$, setting the momenta $p_{\la_k}$, $k=1,\ldots,N/2$, equal to $-p_{\mu_k}$ and restricting the $p_{\mu_k}$ to the upper mass shell, i.e. constraining them to the form $p_{\mu_k}=(\eps(p_{\mu_k}^1),p_{\mu_k}^1)$, we may then substitute $\sinh\te_k:=p_{\mu_k}^1$, $dp_{\mu_k}^1/d\te_k=\eps(p_{\mu_k}^1)$. Since the boosts $\La_t$ act by shifts in the rapidities $\te_k$, this shows that we have
    \begin{align}
        |W_{(\la,\mu)}(t)|
        &\leq 
        \int\limits_{\Rl^N} F(\bte,\bte'-t,\bte''-t,\bte''')\,\prod_{k=1}^{N/2}\delta(\te_{\la_k}-\te_{\mu_k})d\te_{\la_k}d\te_{\mu_k}, 
    \end{align}
    where $F\in\Ss(\Rl^N)$ is a Schwartz function, $\bte$, $\bte'$, $\bte''$, and $\bte'''$ are the rapidity transforms of the variables of $\tilde l$, $\tilde f^\kappa$, $\tilde g^{-\kappa}$, and $\tilde r$, respectively, and the shorthand notations $\bte'-t  := (\te_1'-t,\ldots,\te_n'-t)$ and $\bte''-t  := (\te_1''-t,\ldots,\te_m''-t)$ have been introduced.
    
    We now estimate $F$ in terms of Schwartz seminorms in order to obtain 
    \begin{align*}
        |W_{(\la,\mu)}(t)|
        &
        \leq
        C\int\limits_{\Rl^{N/2}} \prod_{\alpha}(1+\te_\alpha^2)^{-1}(1+(\te_\alpha-t)^2)^{-1}\cdot \prod_{\beta}(1+\te_\beta^2)^{-2} \;d\bte.
    \end{align*}
    Here the first product arises from those contractions $(\la_k,\mu_k)$ that link $t$-dependent and $t$-independent variables; by the type (I) assumption this product is not empty. The second product arises from those contractions $(\la_k,\mu_k)$ that link either two $t$-independent or two $t$-dependent variables. In the above form, it is then clear that dominated convergence can be applied, and we arrive at the claimed limit $\lim\limits_{t\to\pm\infty}W_{(\la,\mu)}(t)=0$.
    
    \medskip
    
    \noindent{\bf (II)} We move on to type (II) contractions, and claim that in this case $W_{(\la,\mu)}(t)$ does not depend on $t$. 
    
    By definition of type (II), all variables of $\tilde f^\kappa$ are contracted amongst themselves, and likewise for $\tilde g^{-\kappa}$. In view of the energy momentum delta distributions $\delta(p+q)$ in the two-point function, in this case we observe that on the support of $\prod_{k=1}^{N/2}\tilde{W}_2(p_{\la_k},p_{\mu_k})$, the sums $p(f)$ and $p(g)$ \eqref{eq:momentumsums} vanish. Thus the exponential terms in \eqref{C} drop out and the integrand becomes a factor of three functions in independent variables: a) $\tilde l(p_1,\ldots,p_a)\tilde r(p_{a+n+m+1},\ldots,p_N)$, b) $\tilde f^\kappa(\La_t p_{a+1},\ldots,\La_t p_{a+n})$, and c) $\tilde g^{-\kappa}(\La_t p_{a+n+1},\ldots,\La_t p_{a+n+m})$. In view of the structure of a type (II) contraction, also $\prod_{k=1}^{N/2}\tilde{W}_2(p_{\la_k},p_{\mu_k})$ splits in the same manner so that $W_{(\la,\mu)}(t)$ factors into a product of three integrals. Now the substitutions $p_j\mapsto\La_{-t}p_j$ in the integrals over $\tilde f^\kappa$ and $\tilde g^{-\kappa}$ eliminate all $t$-dependence, i.e. $W_{(\la,\mu)}(t)$ is given by
    \begin{align*}
        &
        \int \tilde{l}(p_1,\ldots,p_a)\tilde r(p_{a+n+m+1},\ldots,p_N)\prod_{k\atop \{\la_k,\mu_k\}\subset\I(l)\cup\I(r)} \tilde W_2(-p_{\la_k},-p_{\mu_k})dp_{\la_k}dp_{\mu_k}
        \\
        &\quad\times
        \int \tilde{f^\kappa}(p_{a+1},\ldots,p_{a+n})\prod_{k\atop\{\la_k,\mu_k\}\subset\I(f)} \tilde W_2(-p_{\la_k},-p_{\mu_k})dp_{\la_k}dp_{\mu_k}
        \\
        &\quad\times
        \int \tilde{g^{-\kappa}}(p_{a+1},\ldots,p_{a+n})\prod_{k\atop\{\la_k,\mu_k\}\subset\I(g)} \tilde W_2(-p_{\la_k},-p_{\mu_k})dp_{\la_k}dp_{\mu_k}
    \end{align*}
    By this factorization, the sum over all type (II) contractions decomposes into a product of three sums, running over all contractions of $\I(l)\cup\I(r)$, $\I(f)$, and $\I(g)$, respectively. 
    
    By Wick's Theorem, the first sum (contractions of $\I(l)\cup\I(r)$) coincides with $\om(\phi_0(l)\phi_0(r))=\langle\Psi(l^*),\Psi(r)\rangle=\langle\Psi',\Psi\rangle$. The contractions of $\I(f)$ sum to $W_n(f^\kappa)=\om(\phi_\kappa(f_1)\cdots\phi_\kappa(f_n))=\om(X)$, and analogously the contractions of $\I(g)$ sum to $\om(Y')$. 
    
    Hence we obtain 
    \begin{align}
        \sum_{(\la,\mu)\,\text{type (II)}} W_{(\la,\mu)}(t)
        &=
        \om(X)\om(Y')\langle\Psi',\Psi\rangle.\nonumber
    \end{align}
    
    \noindent{\bf (III)} We claim 
     \begin{align}
        \lim_{t\to\pm\infty}W_{(\la,\mu)}(t)=0,\qquad (\la,\mu) \text{ of type (III)}.
    \end{align}
    The initial part of the following arguments works for both, type (III) and type (IV). As there are no contractions connecting $\tilde f^\kappa,\tilde g^{-\kappa}$ with $\tilde l,\tilde r$ for these types, we can again remove the $t$-dependence in the variables of $\tilde f^\kappa$ and $\tilde g^{-\kappa}$ by substituting $p_j\mapsto\La_{-t}p_j$. This substitution does not introduce $t$-dependence in the variables of $\tilde l$ and $\tilde r$, but only in the exponentials, resulting in the factor $\exp(ip(l)\cdot Q_\kappa\La_{-t}p(f))\exp(-i\La_{-t}p(g)\cdot Q_\kappa p(r))$. Since all variables $p_k$ with $k\in\I(f)\cup\I(g)$ are contracted amongst themselves, we have $p(f)+p(g)=0$ and $p(l)+p(r)=0$ on the support of the delta distributions. Taking into account the antisymmetry of $Q_\kappa$, this simplifies the exponential factor to $\exp(2ip(l)\cdot Q_\kappa\La_{-t}p(f))$.
    
    After these remarks, we switch again to rapidity parameterization as for type~(I) and obtain 
    \begin{align}\label{III+IV}
     W_{(\la,\mu)}(t)
     =
     \int d\te d\te' \,L(\te_1,\ldots,\te_{(a+b)/2})F(\te_1',\ldots,\te_{(n+m)/2}')\prod_{j,k}e^{2i\kappa\sinh(\te_j-\te_k'+t)}.
    \end{align}
    Here $L\in\Ss(\Rl^{(a+b)/2})$ arises from $\tilde l\ot \tilde r$ by rapidity parameterization and identifying variables via integration of delta distributions (hence only half of the original $a+b$ many variables remain). The function $F\in\Ss(\Rl^{(n+m)/2})$ arises from $\tilde f^{\kappa}\ot \tilde g^{-\kappa}$ in the same way. 
    
    The exponential factor $\prod_{j,k}e^{2i\kappa\sinh(\te_j-\te_k'+t)}$ in \eqref{III+IV} is obtained from the previously described $\exp(2ip(l)Q\La_{-t}p(f))$ as follows: First we note that contractions $(\la_k,\mu_k)$ that are $f$-internal (i.e. $\{\la_k,\mu_k\}\subset\I(f)$) have the effect that the partial summand $p_{\la_k}+p_{\mu_k}=0$ vanishes from $p(f)$ \eqref{eq:momentumsums}, and analogously for $l$-internal contractions and $p(l)$. Hence the exponential term takes the form $\exp(2ip'(l)Q_\kappa\La_{-t}p'(f))$, where the primes indicate that the sums \eqref{eq:momentumsums} run only over those momenta that are not removed by $l$-internal or $f$-internal contractions. 
    
    As $p(\te)\cdot Q_\kappa\La_{-t}p(\te')=\kappa\sinh(\te-\te'+t)$, this explains the product $\prod_{j,k}$ in \eqref{III+IV} which runs over the remaining not self-contracted variables.
    
    Since we are not in type (II), at least one such variable of $\tilde f^\kappa$ is contracted with a variable of $\tilde g^{-\kappa}$, i.e. the product over $k$ is not empty. 
    
    Now we distinguish between type (III) and (IV). In type (III), the product over $j$ is also not empty (that is, $(\la,\mu)$ connects $\tilde l$ and $\tilde r$). We claim that 
    \begin{align}\label{IIIlimit}
        \lim_{t\to\pm\infty}W_{(\la,\mu)}(t)=0,\qquad (\la,\mu) \text{ of type (III)}.
    \end{align}
    To show this, we use a Riemann-Lebesgue type argument and use integration by parts to exploit the oscillatory term $\prod_{j,k}e^{2i\kappa\sinh(\te_j-\te'_k+t)}$; this is the point were $\kappa\neq0$ enters the proof of the theorem. We pick some $j$ occuring in this product, and rewrite the preceding integral as
    \begin{align*}
     |W_{(\la,\mu)}(t)|
     &=
     \left|\int d\te d\te' \,\frac{L(\te_1,\ldots,\te_{(a+b)/2})F(\te_1',\ldots,\te_{(n+m)/2}')}{\sum_{k}2i\kappa\cosh(\te_j-\te_k+t)}\frac{\partial}{\partial\te_j}\prod_{j,k}e^{2i\kappa\sinh(\te_j-\te_k'+t)}
     \right|
     \\
     &\leq
     \int d\beta d\te \,\left|F(\te'_1,\ldots,\te'_{(n+m)/2})
     \frac{\partial}{\partial\te_j}\frac{L(\te_1,\ldots,\te_{(a+b)/2})}{\sum_{k}2i\kappa\cosh(\te_j-\te'_k+t)}
     \right|
    \end{align*}
    Expanding the derivatives and using $\left|\frac{\partial}{\partial\te_j}\frac{1}{\cosh(\te_j-\te_k'+t)}\right|\leq\frac{1}{\cosh(\te_j-\te_k+t)}$, it is then easy to see that the integrand goes to zero pointwise as $t\to\pm\infty$. Since $\cosh x\geq1$, we have an integrable majorant and may apply dominated convergence to conclude the claimed limit \eqref{IIIlimit}.
    
    \medskip 
    
    \noindent{\bf (IV)} Type (IV) contributions are also described by \eqref{III+IV}. But in contrast to type (III), there is no contraction between $\tilde l$ and $\tilde r$ in this case, which results in the product $\prod_{j,k}$ in \eqref{III+IV} dropping out. So as in type (II), it follows that $W_{(\la,\mu)}(t)$ is independent of $t$ in type (IV), and the sum over all type (IV) contractions splits into three sums: a) the contractions of all variables of $\tilde l$, b) the contractions of all variables of $\tilde r$, and c) the contractions of all variables of $\tilde f^\kappa$ and $\tilde g^{-\kappa}$. 
    
    The sum a) yields $\om(\phi_0(l_1)\cdots\phi_0(l_a))=\langle\Psi(l^*),\Om\rangle=\langle\Psi',\Om\rangle$, and similarly the sum b) results in $\langle\Psi(r),\Om\rangle=\langle\Psi,\Om\rangle$. 
    
    To evaluate the sum c), note that we are here summing only those contractions between $\tilde f^\kappa$ and $\tilde g^{-\kappa}$ that are connecting $\tilde f^\kappa$ and $\tilde g^{-\kappa}$ and not those contractions that are $f$-internal and $g$-internal. Hence the sum c) yields 
    \begin{align*}
        \om(\phi_\kappa&(f_1)\cdots\phi_\kappa(f_n)\phi_{-\kappa}(g_1)\cdots\phi_{-\kappa}(g_m))
        \\
        &-
        \om(\phi_\kappa(f_1)\cdots\phi_\kappa(f_n))\cdot\om(\phi_{-\kappa}(g_1)\cdots\phi_{-\kappa}(g_m))
        =
        \om(XY')-\om(X)\om(Y'),
    \end{align*}
    and we obtain 
   \begin{align*}
        \sum_{(\la,\mu)\,\text{type (IV)}} W_{(\la,\mu)}(t) = (\om(XY')-\om(X)\om(Y'))\langle\Psi,\Om\rangle\langle\Om,\Psi'\rangle.
    \end{align*}
    Summing all types (I)--(IV), we arrive at 
    \begin{align*}
        \lim_{t\to\pm\infty}\langle\Psi',\sigma_t(XY')\Psi\rangle 
        &= 
        \langle\Psi',\left((\om(XY')-\om(X)\om(Y'))P_\Om+\om(X)\om(Y')1\right)\Psi\rangle
        \\
        &= 
        \langle\Psi',\left(\om(XY')P_\Om+\om(X)\om(Y')(1-\POm)\right)\Psi\rangle.
    \end{align*}
    This finishes the proof of the theorem.
\end{proof}

It should be noted that the proof of Theorem~\ref{thm:limit-XY} did not use any specific support properties of $f$ and $g$, i.e. strict localization of the operators $X,Y'$ was not required. Also we did not only evaluate the limit $t\to-\infty$, corresponding to scaling points in $x\in\Rl\backslash\{0\}$ to $\infty$ (as in Prop.~\ref{prop:triviality-with-limits}), but also the opposite limit $t\to+\infty$, which corresponds to scaling $x$ to $0$.

\medskip 

We next transfer the limiting behaviour established in Theorem~\ref{thm:limit-XY} to arbitrary (bounded) operators in the von Neumann algebras $\M_\kappa$, ${\M_\kappa}'$ \eqref{Mkappa}.

\begin{theorem}\label{thm:boundedlimit}
    Consider the Borchers triple $(\M_\kappa,T,\Om)$ defined in \eqref{Mkappa} and let $\kappa>0$. Then, for any $A\in\M_\kappa$, $B\in{\M_\kappa}'$,
    \begin{align}\label{eq:weak-limit-formula}
        \wlim_{t\to\pm\infty} \sigma_t(AB) &= 
            \om(AB)\POm+\om(A)\om(B)\POm^\perp
        .
    \end{align}
\end{theorem}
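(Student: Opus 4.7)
The plan is to bootstrap from the unbounded field-polynomial statement of Theorem~\ref{thm:limit-XY} to arbitrary bounded operators $A\in\M_\kappa$ and $B\in\M_\kappa'$ by a two-step density argument, using the uniform norm bound $\|\sigma_t(AB)\|\le\|A\|\|B\|$ and the fact that $\M_\kappa$ (resp.\ $\M_\kappa'$) is generated by the bounded Weyl-type unitaries $W_\kappa(h)=e^{i\varphi_\kappa(h)}$ for $h\in H$ (resp.\ $W_{-\kappa}(h')=e^{i\varphi_{-\kappa}(h')}$ for $h'\in H'$).

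First I would reduce to the scalar limit
\[
\lim_{t\to\pm\infty}\langle\Psi',\sigma_t(AB)\Psi\rangle=\bigl\langle\Psi',\bigl(\om(AB)\POm+\om(A)\om(B)\POm^\perp\bigr)\Psi\bigr\rangle
\]
on a total set of finite-particle-number vectors $\Psi,\Psi'$ from the Wightman domain of $\phi_0$. By weak compactness of the unit ball of $\B(\Hil)$, the family $\{\sigma_t(AB)\}_t$ admits weak limit points as $t\to\pm\infty$; by Lemma~\ref{lemma:limits}, every such limit point lies in $\scrX_\kappa$, commutes with $\Delta^{it}$, and acts on $\Om$ as the scalar $\om(AB)$. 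It thus suffices to identify its matrix elements on the chosen dense set, and for that I would approximate $A$ in the strong operator topology by bounded polynomials $A_n$ in the $W_\kappa(h_i)$ with $\|A_n\|\le\|A\|$ (Kaplansky density), and similarly approximate $B$ by $B_m$; normality of $\om$ passes the right-hand side to the limit in $n,m$.

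To handle the left-hand side uniformly in the Kaplansky approximation, I would use that $\Delta^{it}=\Gamma(\Delta_1^{it})$ commutes with the particle-number operator, so that $\Delta^{-it}\Psi$ stays in the fixed finite-particle sector $\Hil_n$ containing $\Psi$; this localizes the error estimate to a controllable subspace and reduces the problem to the case where $A,B$ are polynomials in the Weyl-type generators. To settle that case I would expand each factor on a finite-particle vector via the absolutely convergent series
\[
W_\kappa(h)\Psi=\sum_{k=0}^\infty\frac{(i\varphi_\kappa(h))^k}{k!}\Psi,
\]
justified by the standard bound $\|\varphi_\kappa(h)^k\Psi\|\le C_h^k\sqrt{(n+k)!/n!}\,\|\Psi\|$ for $\Psi\in\Hil_n$, which is valid for the deformed field because the phase factors in \eqref{def:ak} have modulus one, and which makes $\Psi$ an entire analytic vector for $\varphi_\kappa(h)$. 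Inserting these expansions turns the matrix element into an absolutely convergent multiple series of matrix elements of field polynomials $X=\prod\varphi_\kappa(h_i)^{k_i}$ and $Y'=\prod\varphi_{-\kappa}(h_j')^{l_j}$, each of which is covered by Theorem~\ref{thm:limit-XY}. Summing term-by-term and using normality of $\om$ to recombine the right-hand side into $\om(AB)\POm+\om(A)\om(B)\POm^\perp$ gives the desired identity.

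The main obstacle is the uniformity-in-$t$ of the Kaplansky step: strong-operator convergence $A_n\to A$ does not by itself imply $\|(A-A_n)B\Delta^{-it}\Psi\|\to 0$ uniformly in $t$, since the set $\{B\Delta^{-it}\Psi:t\in\Rl\}$ is only weakly, not strongly, compact. The particle-number invariance of $\sigma_t$, combined with the explicit structure of the deformed generators in Section~\ref{section:defdeformed}, is what restores the required uniformity by confining the relevant estimates to a single finite-particle sector. Once this bookkeeping is in place, the theorem follows from Theorem~\ref{thm:limit-XY}, Lemma~\ref{lemma:limits}, and the density argument above.
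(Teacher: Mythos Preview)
Your bootstrap strategy contains a genuine gap precisely at the point you yourself flag as ``the main obstacle'': the Kaplansky density step. Confining $\Delta^{-it}\Psi'$ to a fixed finite-particle sector $\Hil_{\le N}$ does \emph{not} restore the uniformity you need, because that sector is still infinite-dimensional and the orbit $\{\Delta^{-it}\Psi':t\in\Rl\}$ is only a weakly (not norm-) precompact subset of its unit ball. Strong-operator convergence $A_n^*\to A^*$ restricted to $\Hil_{\le N}$ therefore gives no uniform control of $\|(A_n^*-A^*)\Delta^{-it}\Psi'\|$ along this orbit; nothing in Section~\ref{section:defdeformed} changes this, since the issue is about approximating a general bounded $A\in\M_\kappa$, not about the explicit form of the generators. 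Your power-series step from Weyl polynomials down to field polynomials, by contrast, can be made to work: since $\|\Delta_1^{it}h\|=\|h\|$ and $\|S_1\Delta_1^{it}h\|=\|S_1h\|$, the analytic-vector bounds on $\varphi_\kappa(\Delta_1^{it}h)^k\Psi$ are $t$-independent and dominated convergence applies. But that only takes you to polynomials in the Weyl unitaries, not to all of $\M_\kappa$.

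The paper avoids exactly this obstacle by a swap trick that your proposal does not use. Writing $\Psi'=L^*\Om$, $\Psi=R\Om$ with $L$ affiliated to $\M_\kappa'$ and $R$ to $\M_\kappa$, the commutation relations give
\[
\langle\Psi_{L^*},\sigma_t(AB-P_{AB})\Psi_R\rangle
=
\langle\Psi_{A^*},\sigma_{-t}(LR-P_{LR})\Psi_B\rangle,
\]
which moves the field polynomials $L,R$ inside $\sigma$ and pushes the bounded $A,B$ into the test vectors. One then first takes $A,B$ \emph{smooth}, so that $\Psi_{A^*},\Psi_B\in\bigcap_k\dom(N^k)$, and controls the particle-number tails via $\|L^*N^{-k}\|<\infty$ together with $\|Q_n^\perp N^k\Psi_{A^*}\|\to0$; this produces a genuinely $t$-independent bound replacing the failed Kaplansky uniformity. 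A second application of the same identity, now with bounded $L\in\M_\kappa'$, $R\in\M_\kappa$, removes the smoothness hypothesis on $A,B$. Without this swap (or an equivalent device yielding $t$-uniform control of the density approximation), your argument does not close.
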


\noindent{\em Remark:} Note that by setting $A=1$ or $B=1$, \eqref{eq:weak-limit-formula} reproduces $\sigma_t(A)\to\om(A)1$ and $\sigma_t(B)\to\om(B)1$ in accordance with the fact that these limits are contained in the type III${}_1$ factors $\M$ and $\M'$ and fixed points of the modular group, thus trivial.

\begin{proof}
    As shorthand notations, we will write $\Psi_A=A\Om$ and $P_{AB}=\om(AB)\POm+\om(A)\om(B)\POm^\perp$ for the expression appearing in \eqref{eq:weak-limit-formula}, and similarly for various other operators. Note that $P_{AB}=\sigma_t(P_{AB})$ is invariant under the modular group. 
    
    The idea of the proof is to consider expectation values $\langle\Psi',\sigma_t(AB-P_{AB})\Psi\rangle$ in special vectors, namely vectors of the form $\Psi'=\Psi_{L^*}$, $\Psi=\Psi_R$, where $L,R$ are closed operators affiliated with ${\M_\kappa}'$ (``left'') and $\M_\kappa$ (``right''), respectively, and with domains such that $\Om\in\D(L^*)\cap\D(R)$. As $\M_\kappa$ and $\M'_\kappa$ are stable under the modular group, this implies $\sigma_t(A^*)\Om\in\D(L^*)$, $\sigma_t(B)\in\D(R)$, and $L^*\sigma_t(A^*)\Om=\sigma_t(A^*)L^*\Om$, $R\sigma_t(B)\Om=\sigma_t(B)R\Om$, so that we may rewrite the expectation value in question as 
    \begin{align*}
        \langle \sigma_t(A^*)\Psi_{L^*},\sigma_t(B)\Psi_R\rangle
        &=
        \langle L^*\sigma_t(A^*)\Om,R\sigma_t(B)\Om\rangle
        =
        \langle \sigma_{-t}(L^*)\Psi_{A^*},\sigma_{-t}(R)\Psi_{B}\rangle.
    \end{align*}
    This symmetry in the roles of $A,B$ and $L,R$ also holds for the claimed limit $\langle\Psi',P_{AB}\Psi\rangle$, namely
     \begin{align*}
        \langle \Psi_{L^*},P_{AB}\Psi_R\Om\rangle
        &=
        \om(AB)\om(L)\om(R)+\om(A)\om(B)\om(LR)-\om(A)\om(B)\om(L)\om(R)
        \\
        &=
        \langle\Psi_{A^*},P_{LR}\Psi_{B}\rangle.
    \end{align*}
    Hence
    \begin{align}\label{eq:symmetrylimit}
        \langle\sigma_t(A^*)\Psi_{L^*},\sigma_t(B)\Psi_R\rangle-\langle\Psi_{L^*},P_{AB}\Psi_R\Om\rangle
        =
        \langle\sigma_{-t}(L^*)\Psi_{A^*},\sigma_{-t}(R)\Psi_{B}\rangle-\langle \Psi_{A^*},P_{LR}\Psi_{B}\rangle.
    \end{align}
    We mention as an aside that this symmetric formula implies that in case the limit \eqref{eq:weak-limit-formula} holds for $t\to-\infty$, then it also holds for $t\to+\infty$, and vice versa.
    
    In a first step, we will choose $L,R$ to be field operator polynomials as in Prop.~\ref{thm:limit-XY}, but smeared with test functions supported on the left and right, so that $L$ is affiliated with ${\M_\kappa}'$ and $R$ is affiliated with ${\M_\kappa}$. The operators $A,B$ are taken to be bounded and (for technical reasons) smooth, i.e. $A\in{\M_\kappa}^\infty$, $B\in{{\M_\kappa}'}^\infty$. We mention in passing that due to the localization properties of $\M_\kappa,{\M_\kappa}'$, the smooth elements of these von Neumann algebras form strongly dense ${}^*$-subalgebras \cite{BuchholzLechnerSummers:2011}. Generalizing the limit formula \eqref{eq:limit-formula} of Prop.~\ref{thm:limit-XY} to vectors $\Psi',\Psi$ that are not necessarily of finite particle number, but rather of the form $\Psi'=\Psi_{A^*}$, $\Psi=\Psi_B$, we will show below that the right hand side of \eqref{eq:symmetrylimit} converges to $0$ as $t\to\pm\infty$.
    
    Postponing the proof of this part for a moment, let us explain how it implies the conclusion of the theorem. Taking into account \eqref{eq:symmetrylimit} and the fact that the vectors $\Psi_{L^*}$, $\Psi_R$ range over dense subspaces of $\Hil$ as $L,R$ vary within the limitations explained above (Reeh-Schlieder property), and $\sigma_{t}(AB-P_{AB})$, $t\in\Rl$, is uniformly bounded in operator norm, the claimed weak limit $\sigma_t(AB)\to P_{AB}$ follows immediately for smooth $A,B$.
    
    To eliminate the assumption of smoothness, we then consider \eqref{eq:symmetrylimit} once more, this time with smooth $A,B$ as before, and bounded, not necessarily smooth $L\in{\M_\kappa}'$, $R\in\M_\kappa$. At this stage we know that the left hand side of \eqref{eq:symmetrylimit} goes to zero as $t\to\pm\infty$, and hence the desired limit $\langle\Psi_{A^*},\sigma_{-t}(LR-P_{LR})\Psi_B\rangle\to0$ holds (right hand side of \eqref{eq:symmetrylimit}). But as the smooth subalgebras ${\M_\kappa}^\infty\subset\M_\kappa$ and ${{\M_\kappa}'}^\infty\subset{\M_\kappa}'$ are strongly dense, they have $\Om$ as a cyclic vector. Hence the limit carries over to arbitrary vectors on the left and right hand sides of the scalar product, and the proof is finished. 
    
    \medskip 
    
    It remains to show that the right hand side of \eqref{eq:symmetrylimit} converges to $0$ for field operator polynomials $L,R$ as in Prop.~\ref{thm:limit-XY} and arbitrary smooth $A\in{\M_\kappa}^\infty$, $B\in{\M_\kappa'}^\infty$. To that end, let $Q_n:=P_0\oplus\ldots\oplus P_n$ denote the orthogonal projection onto the subspace of particle number at most $n$ in our Fock space~$\Hil$. The projections $Q_n^\perp$ leave the domain of $L^*$ invariant, and we may estimate according to 
    \begin{align*}
        |\langle \sigma_t(L^*)\Psi_{A^*},&\sigma_t(R)\Psi_B\rangle-\langle\Psi_{A^*},P_{LR}\Psi_B\rangle|
        \\
        &\leq
        |\langle Q_n\Psi_{A^*},\sigma_t(LR-P_{LR})\Psi_A\rangle|
        +
        |\langle Q_n^\perp\Psi_{A^*},(\sigma_t(LR)-P_{LR})\Psi_B\rangle|
    \end{align*}
    Since $L,R$ are field operator polynomials, they change particle number only by a finite amount, i.e. there exists $m\in\Nl$ such that for all $n\in\Nl,t\in\Rl$
    \begin{align*}
        |\langle &Q_n\Psi_{A^*},\sigma_t(LR-P_{LR})\Psi_B\rangle|
        +
        |\langle Q_n^\perp\Psi_{A^*},\sigma_t(LR-P_{LR})\Psi_B\rangle|
        \\
        &=
        |\langle Q_n\Psi_{A^*},\sigma_t(LR-P_{LR})Q_{n+m}\Psi_B\rangle|
        +
        |\langle Q_n^\perp\Psi_{A^*},\sigma_t(LR-P_{LR})Q_{n-m}^\perp \Psi_B\rangle|
        \\
        &\leq
        |\langle Q_n\Psi_{A^*},\sigma_t(LR-P_{LR})Q_{n+m}\Psi_B\rangle|
        \\
        &\qquad 
        +
        \|\sigma_t(L^*)Q_n^\perp\Psi_{A^*}\|\|\sigma_t(R)Q_{n-m}^\perp \Psi_B\|+\|Q_n^\perp\Psi_{A^*}\|\|P_{LR}\|\|Q_{n-m}^\perp \Psi_B\|.
    \end{align*}
    The first term contains only vectors of finite particle number on the left and right and therefore converges to $0$ as $t\to\pm\infty$ by Prop.~\ref{thm:limit-XY}. The third term goes to zero as $n\to\infty$ because $Q_n^\perp,Q_{n-m}^\perp\to0$ strongly in this limit. Hence it is sufficient to show that the two norms $ \|\sigma_t(L^*)Q_n^\perp\Psi_{A^*}\|$, $\|\sigma_t(R)Q_{n-m}^\perp \Psi_B\|$ go to zero as $n\to\infty$, uniformly in $t\in\Rl$.
    
    At this point, our smoothness assumption on $A$ and $B$ enters. Namely, since $A,B$ are smooth, the vectors $\Psi_{A^*},\Psi_B$ lie in particular in $\bigcap_{k\geq0}\D(\Pst_0^k)$, where $\Pst_0$ is the generator of the time translations, i.e. the second quantization of $\frac{1}{2}(\Pst+\Pst^{-1})\geq1$. This implies $\Pst_0\geq N$ (the particle number operator, the second quantization of the identity), and hence $\Psi_{A^*},\Psi_B\in\bigcap_{k\geq0}\D(N^k)$. On the other hand, there exists $k\in\Nl$ such that $L^* N^{-k}$ and $RN^{-k}$ are bounded because each single field operator $\phi_{\pm\kappa}(f)$, $f\in\Ss(\Rl^2)$, satisfies $\|\phi_{\pm \kappa}(f)N^{-1/2}\|<\infty$. Choosing $k$ large enough in this manner, and observing that $Q_n^\perp$ and $\Delta^{it}$ commute with $N$, we can estimate according to
    \begin{align*}
        \|\sigma_t(L^*)Q_n^\perp\Psi_{A^*}\|
        =
        \|L^* N^{-k}\Delta^{-it}Q_n^\perp N^k\Psi_{A^*}\|
        \leq
        \|L^*N^{-k}\|\|Q_n^\perp N^k\Psi_{A^*}\|,
    \end{align*}
    which goes to zero as $n\to\infty$, uniformly in $t$, as required. The second norm $\|\sigma_t(R)Q_{n-m}^\perp \Psi_B\|$ can be estimated analogously.
\end{proof}

To conclude singularity of the relative commutant $\N_\kappa'\cap\M_\kappa$ from this we need the following simple lemma.

\begin{lemma}\label{lemma:noproductstate}
    Let $(\N\subset\M,\Om)$ be a half-sided modular inclusion on a Hilbert space~$\Hil$ of dimension $\dim\Hil>1$. Then $\om=\langle\Om,\,\cdot\,\Om\rangle$ is not a product state on $\N\vee J\N J$.
\end{lemma}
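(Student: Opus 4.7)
My plan is to argue by contradiction: assuming that $\om$ factorises as a product state on $\N\vee J\N J$, I will deduce that $\N=\Cl1$, which—together with the cyclicity of $\Om$ for $\N$—forces $\dim\Hil=1$ and contradicts the hypothesis.

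First I would note that $J\N J\subset J\M J=\M'$, so $\N$ and $J\N J$ are commuting subalgebras generating $\N\vee J\N J$. The product-state assumption therefore reads
\[
\om(A\cdot JBJ)=\om(A)\,\om(JBJ),\qquad A,B\in\N.
\]
Using $J\Om=\Om$ and the antilinear isometry property of $J$, one immediately obtains $\om(JBJ)=\overline{\om(B)}$, while Tomita's relation $S=J\Delta^{1/2}$ applied on $\M\Om$ yields $JB\Om=\Delta^{1/2}B^*\Om$ for $B\in\M$. Substituting these identifications and then replacing $A,B$ by their adjoints (which still lie in $\N$), the hypothesis reduces to the Hilbert-space identity
\[
\langle A\Om,\Delta^{1/2}B\Om\rangle=\overline{\om(A)}\,\om(B)=\langle A\Om,\om(B)\Om\rangle,\qquad A,B\in\N.
\]

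Cyclicity of $\Om$ for $\N$ now makes $\N\Om$ dense in $\Hil$, so I may conclude $\Delta^{1/2}B\Om=\om(B)\Om$ for every $B\in\N$. Since $\Delta^{1/2}\Om=\Om$ and $\Delta^{1/2}$ is injective, this forces $B\Om=\om(B)\Om$; the separating property of $\Om$ for $\N$ then upgrades this to $B=\om(B)\cdot 1$, i.e.\ $\N=\Cl1$. But then $\overline{\N\Om}=\Cl\Om$ cannot equal $\Hil$ when $\dim\Hil>1$, contradicting the cyclicity of $\Om$ for $\N$.

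The only part requiring genuine care is the modular-theoretic reduction in the second step—keeping the antilinearity of $J$ straight and applying $S=J\Delta^{1/2}$ to the correct vectors—after which the remaining deduction is essentially formal.
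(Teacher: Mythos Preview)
Your argument is correct, and it takes a genuinely different route from the paper's proof.

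The paper exploits the translation semigroup that is part of the half-sided modular structure: since $\alpha_x(\N)\subset\N$ for $x>0$ and $\Om$ is translation-invariant, the product-state assumption gives $\langle A^*\Om,T(-x)B\Om\rangle=\langle A^*\Om,\POm B\Om\rangle$ for all $A\in\N$, $B\in J\N J$, and cyclicity then forces $T(-x)=\POm$, which is impossible unless $\dim\Hil=1$.

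Your proof, by contrast, never touches the translations. You work purely with the modular data of $(\M,\Om)$, rewriting the factorisation hypothesis via $JB\Om=\Delta^{1/2}B^*\Om$ as $\Delta^{1/2}B\Om=\om(B)\Om$ for all $B\in\N$, and then using injectivity of $\Delta^{1/2}$ and the separating property to conclude $\N=\Cl1$. This is actually a more general statement: your argument only uses that $\N\subset\M$ with $\Om$ cyclic and separating for both algebras, so it applies to \emph{any} such inclusion, not just half-sided modular ones. The paper's proof is shorter and more in keeping with the surrounding narrative (where $T$ is the central object), but your version isolates the phenomenon from the dynamical input and shows it is a purely modular-theoretic fact.
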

\begin{proof}
    Let $A\in\N$, $B\in J\N J$. Then also $\alpha_x(A)\in\N$ for any $x>0$. If $\om$ was a product state, we would therefore have $\langle A^*\Om,T(-x)B\Om\rangle=\om(\alpha_x(A)B)=\om(A)\om(B)=\langle A^*\Om,\POm B\Om\rangle$ for $x>0$. As $\Om$ is cyclic for $\N$ and $J\N J$, this would imply $T(-x)=\POm$, which is only possible if $\POm=1$, i.e. $\Hil=\Cl\Om$ is one-dimensional.
\end{proof}

\noindent{\em Proof of Theorem~\ref{thm:main}}
As recalled earlier, the statement for $\kappa=0$ is a consequence of the second quantization structure of $\M_0=\M(H)$. For $\kappa>0$, we pick $A\in\N_\kappa$, $B\in J\N_\kappa J$. According to Lemma~\ref{lemma:limits} and Theorem~\ref{thm:boundedlimit}, 
\begin{align*}
 P_{AB}=(\om(AB)-\om(A)\om(B))\POm+\om(A)\om(B)1 \in \scrX_\kappa
\end{align*}
lies in the algebra at infinity. In view of Lemma~\ref{lemma:noproductstate}, we may choose $A,B$ in such a way that the coefficient $\om(AB)-\om(A)\om(B)$ in front of~$\POm$ is not zero. Then $$S=\frac{AB-\om(A)\om(B)1}{\om(AB)-\om(A)\om(B)}\in\N_\kappa\vee J\N_\kappa J$$ satisfies $\sigma_t(S)\to\POm$ weakly as $t\to-\infty$, and Prop.~\ref{prop:triviality-with-limits}~c) gives the result. \hfill$\square$

\section{Conclusion and Outlook}\label{section:conclusion}

After the free product construction of Longo, Tanimoto and Ueda \cite{LongoTanimotoUeda:2019}, the half-sided inclusions constructed in this paper provide further and arguably simpler examples of singular Borchers triples, giving further insight into the structure of the family of all half-sided modular inclusions. 

Although Prop.~\ref{prop:triviality-with-limits} provides a necessary and sufficient condition for a half-sided inclusion to be singular, it seems difficult to use it in order to prove non-singularity. So the ideas presented here consist first and foremost in a method for constructing ``counterexamples'' to the preferred standard case \Std. However, this method might well inform complementary methods that are designed to {\em exclude} the singular case \Sing, or help settle the status of the various undecided examples. 

In this work, it became apparent that half-sided inclusions are very sensitive to deformations and their relative commutant can vary ``discontinuously'' with a deformation parameter. It also seems likely that there exist many singular half-sided inclusions. In view of the weak limit formula \eqref{eq:weak-limit-formula} that is generic for any $A\in\M_\kappa$, $B\in{\M_\kappa}'$, it is not unreasonable to expect that any standard half-sided inclusion $\La=(\N\subset\M,\Om)$, not necessarily arising from a standard pair by second quantization, can be deformed to a one-parameter family of half-sided inclusions $(\La_\kappa)_{\kappa\geq0}$ such that\footnote{Work in progress.} $\La_0=\La$ and $\La_\kappa$ is singular for every $\kappa>0$.

\section*{Acknowledgements}

Support by the German Research Foundation DFG through the Heisenberg project ``Quantum Fields and Operator Algebras'' (LE 2222/3-1) is gratefully acknowledged.

\footnotesize

\end{document}